\newtheorem{Lem}{Lemma}
\newtheorem{Theo}{Theorem}
\newcolumntype{C}[1]{>{\centering\let\newline\\\arraybackslash}m{#1}}
\begin{document}
\title{From Stochastic to Bit Stream Computing: Accurate Implementation of Arithmetic Circuits and Applications in Neural Networks}	
\author{Ensar~Vahapoglu and~Mustafa~Altun
\IEEEcompsocitemizethanks{
--------------------------------------------------------------------------------------------

\textcopyright 2019 IEEE. Personal use of this material is permitted. Permission from IEEE must be obtained for all other uses, in any current or future media, including reprinting/republishing this material for advertising or promotional purposes, creating new collective works, for resale or redistribution to servers or lists, or reuse of any copyrighted component of this work in other works. \protect\\

\textbf{*} This work is supported by the TUBITAK-1001 project \#116E250 and Istanbul Technical University BAP (ITU-BAP) project \#40781. \protect\\

\vspace{-1,5mm}
\textbf{*} E. Vahapoglu and M. Altun are with the Department of Electronics and Communication Eng., Istanbul Technical University, Istanbul,
Turkey, 34469.\protect\\

\vspace{-1,5mm}
\textbf{*} E-mails: \{vahapoglu, altunmus\}@itu.edu.tr \protect\\

\vspace{-1,5mm}
\textbf{*} A preliminary version of this paper appeared in \cite{vahapoglu2016accurate} 
}}

\IEEEcompsoctitleabstractindextext{
\begin{abstract}
In this study, we propose a novel computing paradigm ``Bit Stream Computing" that is constructed on the logic used in stochastic computing, but does not necessarily employ randomly or Binomially distributed bit streams as stochastic computing does. Any type of streams can be used either stochastic or deterministic. The proposed paradigm benefits from the area advantage of stochastic logic and the accuracy advantage of conventional binary logic. 
We implement accurate arithmetic multiplier and adder circuits, classified as asynchronous or synchronous; we also consider their suitability of processing successive streams.  
The proposed circuits are simulated using the Cadence Genus tool with TSMC \SI{}{\textbf{0.18\micro m}} CMOS technology. 
We thoroughly compare the proposed adders and multipliers with their predecessors in the literature, individually and in a neural network application. Comparisons are made in terms of area, speed, power, and accuracy.
We believe that this study opens up new horizons for computing that enables us to implement much smaller yet accurate arithmetic circuits compared to the conventional binary and stochastic ones.

\end{abstract}

\begin{IEEEkeywords}
	Stochastic computing, bit stream computing, arithmetic circuits, neural network.
\end{IEEEkeywords}}

\maketitle
\IEEEdisplaynotcompsoctitleabstractindextext

\section{Introduction} \label{Intro}
\IEEEPARstart{s}{tochastic} computing (SC), first brought forward in 1960s  \cite{von1956probabilistic, gaines1967stochastic}, performs serial data processing with Binomially distributed bit streams. 
Each stream represents a probability value, obtained as the number of 1 valued bits over the total number of bits. Thus, it is possible to use $n+1$ different states with a single input/output, corresponding to $n+1$ different values ranging from $0/n$ to $n/n$  where $n$ is the total number of bits in a stream. On the other hand, conventionally a binary input/output has two states that are logic 0 or logic 1. This  feature offers an important area advantage for SC, especially for arithmetic operations. For example, a single AND gate is used for stochastic multiplication. This is illustrated in Fig. \ref{fig_ClassicalStochastic}. Here, input streams have values of 1/2, so an output value of 1/4 is expected. Although the correct result can be achieved as in Fig. \ref{fig_ClassicalStochastic} a), it is not guaranteed for different cases since 1's and 0's in streams are randomly positioned in SC. Fig. \ref{fig_ClassicalStochastic} b) shows an erroneous result with an output value of 0/4. Here, the relative standard error is 100\%. Note that we represent streams such that the bit on the leftmost is the first to be processed.  

\begin{figure}[!t]
	\centering
	\includegraphics{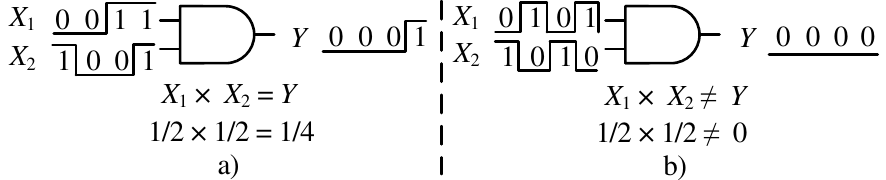}
	\caption{Stochastic multiplication with an AND gate having: a) accurate results, and b) inaccurate results.}
	\label{fig_ClassicalStochastic}
\end{figure}

\begin{figure}[!t]
	\centering
	\includegraphics[scale=0.85]{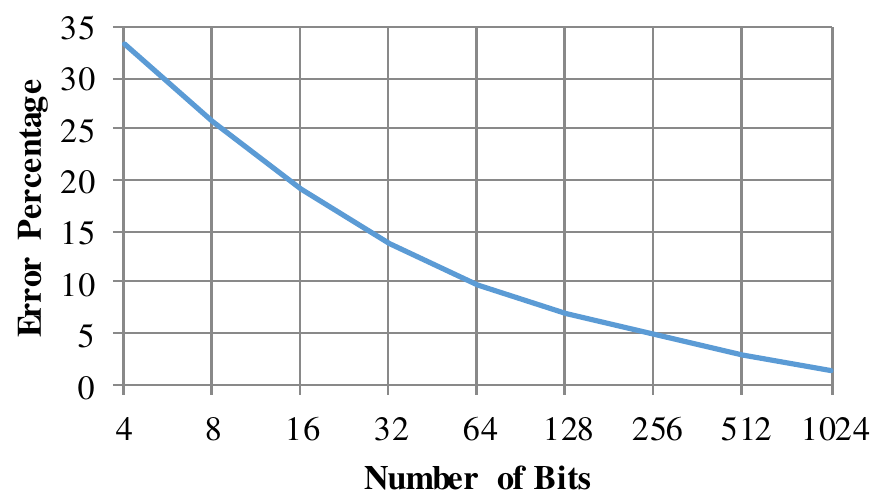}
	%\vspace{-4pt}
	\caption{Average error percentage for an AND gate with respect to the number of bits ($n$) in a stream; inputs probability values are both $ 1/2 $.}
	\label{fig_AvErrorAnd}
\end{figure}

Indeed, SC cannot guarantee error-free computation due to its random feature. Since streams are Binomially distributed, to always achieve zero error, infinite number of bits are needed. Fig. \ref{fig_AvErrorAnd} shows how the average error changes with the number of bits for an AND gate having input values of 1/2. Here, to achieve 10\% and 1\% errors, streams having more than 100 and 1000 bits are needed that is not practical in terms of the computing time. This explains why SC could not become a real competitor to conventional computing although it offers significant area advantage \cite{alaghi2013survey}. Low accuracy or long computing times is the main obstacle in front of SC and the main motivation of this study. 

In this paper, we propose a novel computing paradigm ``Bit Stream Computing (BSC)". Similar to SC, BSC uses unary bit streams having time series of 0's and 1's, and the value of a bit stream is calculated by the total number of 1 valued bits over the total number of bits in the stream. Different from SC, BSC does not necessarily employ randomly or Binomially distributed input/output bit streams and accordingly the bit stream values do not necessarily represent probability values. For example, an AND gate is used for multiplication in SC because applying the logic AND operation to two independent probability values $p_1$ and $p_2$ results in a probability value $p_1\times p_2$. However, an AND gate is not proper for multiplication in BSC since it should be guaranteed in BSC that same input values always result in same output values, and it is not satisfied with an AND gate for multiplication. BSC does not require any type of distribution for streams. In other words, changing the orderings of 1's and 0's in input bit streams should not alter output values. 
%One can state that SC is a class of BSC. 
The proposed paradigm benefits from the area advantage of stochastic logic and the accuracy advantage of conventional binary logic. With BSC, we successfully implement accurate arithmetic multiplier and adder circuits.

Along with the accuracy issue  that is kept on the agenda since the birth of SC, there is another difficulty for SC as well as for BSC: timing problems. They happen mainly due to undesirable changes in the duration of 1's and 0's in a bit stream. Three examples are given in Fig. \ref{fig_RealizationErrors} showing the expected error-free and obtained erroneous signal forms. In the first example in Fig. \ref{fig_RealizationErrors} a), the signal is not processed fast enough; in the second example in Fig. \ref{fig_RealizationErrors} b), there is a glitch; and in the third example in Fig. \ref{fig_RealizationErrors} c), the difference between rise and fall times of the signal causes an error. Considering the severity of these and similar types of timing problems,  any circuit design technique developed for SC or BSC should be justified with timing considerations. For this purpose, we test the proposed circuits with transistor level simulations by considering the correctness of output signals with their values as well as the integrity of the signals showing how close the signals are to ideal forms. 
% and for the fist time in the literature we can confidentially state that our circuits work properly in real applications. 

\begin{figure}[!t]
	\centering
	\includegraphics[scale=1]{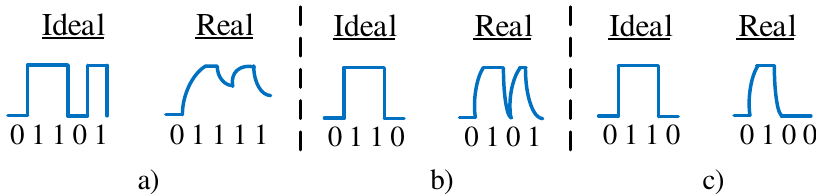}	
	\caption{Examples of timing problems caused by: a) slow processing, b) glitch, and c) difference in rise and fall times.}
	\vspace{-10pt}
	\label{fig_RealizationErrors}
\end{figure}

\subsection{Previous Works and Contributions }
The mainstream solution to improve accuracy in SC is manipulating input bit streams by either decreasing their randomness or making them dependent/correlated. 
For this purpose, pseudo-random and quasi-random number generators are proposed \cite{ichihara2014compact,gupta1988binary}. Pseudo-random generators generally use LFSR's (Linear Feedback Shift Register) that even allow to produce desired orders of 0's and 1's resulting in perfect accuracy \cite{gupta1988binary}. Additionally, quasi-random generators producing low-discrepancy bit streams can decrease error rates \cite{alaghi2014fast, alaghi2015logic}. There are also recent studies exploiting correlation for accuracy \cite{alaghi2013exploiting}, as well as using fully deterministic generators \cite{jenson2016deterministic, najafi2017time}. In \cite{jenson2016deterministic} and \cite{najafi2017time}, accurate arithmetic operations are achieved. 
Nevertheless, for all of these studies uncorrelated or independent bit streams are needed for all inputs. This is achieved either using a separate stream generator for each input or sharing a generator  for multiple inputs with extra circuitries \cite{ichihara2014compact,najafi2017time}. Although the sharing allows some area saving, still the total area needed to generate input streams is linearly dependent with the number of inputs, and this area consumes a majority of the circuit area. %In \cite{}, smaller deterministic generators than random ones are used, but still each input should have its own generator including a clock generation circuitry. 

Another important drawback of the mentioned studies is that they are not suitable for multi-level designs. Outputs of one level can not be directly used as inputs of another level. For example, outputs of two AND gates can not be directly used as inputs of another AND gate. Outputs should be recreated to fit the desired format, and this is quite costly. In \cite{najafi2017time} the authors discuss this problem. The generated input signals, called as PWM signals in the paper, loose their formation at the output. They propose a solution for this, but it requires extra control inputs, so new streams are needed to be generated. This does not just worsen the design complexity, but it also decreases the speed dramatically.  

As opposed to the studies focusing on the generation of bit streams in desired formats, our treatment BSC does only care about the values carried by the streams, so any type of input bit streams can be directly used. This eliminates the need of specific stream generators. Furthermore, there is no extra cost for multi-level designs; output streams of one level can be directly used as inputs. In the literature,  using the same logic, an accurate adder is proposed in \cite{lee2017energy}, called as Alaghi adder in the paper. One of our two proposed synchronous adders is quite similar to the Alaghi adder with additional considerations for timing. Also we show that our adder can be generalized for any number of inputs. Furthermore, along with the adders, we propose two synchronous multipliers.  

All of the above mentioned designs with an aim of improving accuracy use clock signals, so they are synchronous. To eliminate the cost of synchronization, we also propose asynchronous adders and multipliers mainly constructed on delay elements. 
Another shortcoming of these designs is their inability to process successive input bit streams; they are assumed to perform one-time operations. To overcome this shortcoming, we propose an adder and a multiplier that can successively process input bit streams. 

Apart from the mentioned shortcomings, underestimating
the timing problems is a general tendency in
the literature. These problems are indigenous to bit stream operations in SC and BSC, and without solving them it is hard to claim the feasibility of the proposed study. For example, in \cite{najafi2017time}, the authors claim to work with 1 GHz clock signals to generate pulses as input streams. Suppose that 8 bit binary equivalent operations are performed, so there should be at least 256 different values for input streams. Therefore, for the worst case scenario to represent the value of 1/256, a bit stream or a pulse has a 1 valued bit with a duration of 1/($256\times 10^9$) seconds. It means that the proposed circuits should safely process 0.256 THz signals that does not seem to be possible (recall the case in Fig. \ref{fig_RealizationErrors} a)).
Therefore, much slower operations should be used that also causes dramatic area increase for this study (justified in the experimental results section). 

Indeed, timing problems have high significance for any computing paradigm using time series of bits including SC, BSC, and bit serial computing. 
A general solution is using a latch for each output of a circuit block, so time durations of 1 and 0 valued signals are kept close to expected ideal values  \cite{hartley2012digit,parhami2010computer}.  Although this clock based timing is precise, it might cause a considerably large area overhead and it is not suitable for asynchronous designs. Another less precise solution is using buffers \cite{alaghi2017trading}. Of course, increasing time durations of bits also helps solving the timing problems at the cost of decreased speed. Although, solutions to timing problems are not in the scope of this paper, we at least aim to show the timing performance of the proposed designs. For this reason, we have designed all of the proposed circuits in transistor level with timing simulations. 
%We even make the proposed asynchronous circuits fabricated with AMS \SI{}{0.35\micro m} CMOS technology to show the circuits' potential for practical use.

\subsection{Overview}

Three adders and three multipliers performing BSC are proposed. Among the proposed six circuits, one adder and one multiplier  are asynchronous, and the rest four are synchronous. Among the four synchronous circuits, one adder and one multiplier are able to process successive input bit streams.  We evaluate all of the proposed designs with their predecessors by performing simulations with TSMC 0.18\SI{}{\micro m} CMOS technology.  The proposed circuits are also tested in a neural network application.

%We propose both asynchronous (no auxiliary signal) and synchronous adders and multipliers performing BSC. We also show how to process successive input bit streams with BSC, and introduce the related adder and multiplier circuits. Total of 3 adders and 3 multipliers are proposed, and 5 of them are totally new. The synchronous adder being able to perform successive input streams is the improved version of the already proposed one.   

The rest of paper is lined up as follows. Section \ref{limits} is composed of definitions, explanations, and limitations for BSC. In Section \ref{Async} and Section \ref{Sync}, we  introduce our asynchronous and synchronous circuits performing accurate arithmetic operations, respectively. In Section \ref{Exp},
we give experimental results to evaluate the proposed circuits.  Section \ref{Conc} concludes this work with future directions. 
\section{Preliminaries} \label{limits}
We start with a few definitions. We define bit duration as the time duration of a single bit in a stream, and stream length as the number of bits in the stream. We define accuracy as an indicator of having correct or expected output values. For example, having an output value of 1/6 for a multiplier with input values of 1/2 and 1/3 is an accurate operation. For a circuit block, an adder or a multiplier in this study, it is fully-accurate if its operations are always correct, and it is semi-accurate if its operations are sometimes correct. 

Improving accuracy in SC has a fundamental limit as explained in the following theorem.

\begin{Theo}
%If operations in a system are performed using randomly generated and uncorrelated bit streams that is the definition of 
Consider a system with ideal elements performing ideal SC. Accuracy of the system only depends on the expected output values $ z_{e} $'s and the output stream lengths $ n $.
\end{Theo}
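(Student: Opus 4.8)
The plan is to reduce the statement to the elementary fact that a Binomial distribution is determined by its two parameters. First I would fix a precise model of \emph{ideal SC}: every primary input of the system is a bit stream of length $n$ whose bits are i.i.d.\ Bernoulli random variables with parameter equal to that input's value; all primary inputs, together with any auxiliary random streams used internally (e.g.\ the select stream of a MUX-based adder), are mutually independent; each gate or block produces its output stream position by position, the bit at position $j$ being a fixed Boolean function of the bits of its inputs at position $j$; and ``ideal elements'' means there are no realization or timing errors of the kind in Fig.~\ref{fig_RealizationErrors}, so the logical stream on every wire is exactly what these Boolean functions dictate.

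Granting this model, I would observe that the entire system computes, at each output position $j$, a deterministic function $F$ of the vector $V_j$ consisting of all primary-input bits and all auxiliary random bits used at position $j$. By the model the vectors $V_1,\dots,V_n$ are i.i.d., hence the output bits $Y_j = F(V_j)$ are i.i.d.\ as well, each Bernoulli with parameter $z_e := \Pr[Y_j = 1] = E[Y_j]$, i.e.\ the expected output value. Fan-out, shared streams, and multi-level nesting cause no trouble here: they only change \emph{which} Boolean function $F$ is, and hence the numerical value of $z_e$; they do not disturb the i.i.d.\ structure across positions.

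Consequently the number of $1$'s in the output stream is $\mathrm{Binomial}(n, z_e)$, so the realized output value $\hat z$ equals (number of $1$'s)$/n$ and its law is the pushforward of $\mathrm{Binomial}(n,z_e)$ under division by $n$ --- a probability distribution parametrized by nothing but $n$ and $z_e$. Any sensible notion of accuracy is a functional of this law: the mean absolute error $E\,|\hat z - z_e|$, the variance $z_e(1-z_e)/n$, the probability that the error exceeds a prescribed tolerance, or the whole error histogram of the type plotted in Fig.~\ref{fig_AvErrorAnd}. Hence the accuracy depends only on $z_e$ and $n$, and for a system with several outputs the same argument applied to each output wire gives the claim for each $z_{e,i}$.

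I expect the genuine obstacle to lie in making the ideal-SC model airtight rather than in any estimate: one must argue that re-used streams and the auxiliary randomness injected at deeper levels never secretly correlate the output bits at different positions, which is precisely what the independence hypotheses (and the word ``ideal'') are there to guarantee. Once the output stream is established to be $n$ i.i.d.\ Bernoulli($z_e$) trials, the theorem --- which is at heart an impossibility statement, namely that no cleverness in circuit structure or input choice can change the accuracy without changing $z_e$ or $n$ --- follows immediately.
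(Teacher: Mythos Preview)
Your proposal is correct and follows essentially the same route as the paper: both arguments reduce the claim to the observation that, under ideal SC, each output bit is Bernoulli with parameter $z_e$ so that the output stream is $\mathrm{Binomial}(n,z_e)$, whence any accuracy measure is a function of $(n,z_e)$ alone. The paper simply asserts the Bernoulli/Binomial structure and then writes down the standard error $\sqrt{p(1-p)/n}$ and relative error $\sqrt{(1-p)/(pn)}$, whereas you additionally spell out \emph{why} the output bits are i.i.d.\ (same Boolean function applied positionwise to i.i.d.\ input vectors), which is a welcome clarification but not a different idea.
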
 

\begin{proof}
In SC, $ z_{e} $ can also be defined as the probability that each output bit takes a logic 1 value. Therefore, each output bit has a Bernoulli distribution and the output stream has Binomial distribution ($ p=z_{e} $). The standard error (standard deviation) and the relative error can be calculated as $ \sqrt{\frac{p\times(1-p)}{n}} $ and $\sqrt{\frac{(1-p)}{p\times n}} $, respectively; both only depend on $ p=z_{e} $ and $ n $.
\end{proof}

%Note that since this theorem explains the ideal case, it cannot be perfectly applicable for works such as \cite{brown2001stochastic} offering solutions to alter distribution of the output stream, as well as works using shifted, but not fully independent, bit streams to decrease correlations \cite{gaines1967stochastic}. However since these works are again prone to error due to the streams' probabilistic Bernoulli-like behaviors, it would not be wrong to apply the theorem.

This relatively simple theorem tells us that 1) fully-accurate computation is impossible with SC that needs infinite stream lengths; 2) increasing stream lengths $X$ times results in a decrease in error values by only $\sqrt{X}$ times which is not efficient; and 3) in order to achieve high accuracy, randomness in output bit streams should be sacrificed.

Motivated by these inferences, we introduce a novel computing paradigm ``Bit Stream Computing (BSC)" constructed on the following three properties:
\begin{enumerate}
\item Unary bit streams having time series of 0's and 1's is used;
\item The value of a bit stream is calculated by the total number of 1 valued bits over the total number of bits in the stream; and
\item For a circuit or system employing BSC, same input values always result in same output values. In other words, output values are independent of the orderings or distributions of 1's and 0's in input bit streams unless their values do not change.
\end{enumerate}
Note that while SC satisfies the first two properties, bit serial computing with binary weighted bits \cite{hartley2012digit, denyer1985vlsi} satisfies the third property which is related to the accuracy of computing. As a result, BSC benefits from the logic used in SC, but does not necessarily employ randomly or Binomially distributed input/output bit streams as SC does. This allows fully-accurate computing with BSC.  

We perform accurate arithmetic addition and multiplication operations with BSC by considering the constraints given below. Note that since values of bit streams are in the range of 0-1, addition should be scaled by averaging the values. For example, addition of two input values $X_1$ and $X_2$ results in $\frac{X_1+X_2}{2}$. Throughout the paper we simply use the word ``addition" to refer ``scaled addition" . 

\begin{Lem}
	\label{lem_Add}
	Consider two input bit streams with lengths of $ n $. Suppose that the streams take $n+1$ values between $ 0/n$ and $ n/n $. Accurate addition of the inputs with BSC requires an output bit stream with a minimum length of $ 2\times n $.
\end{Lem}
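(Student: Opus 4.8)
The plan is a short pigeonhole count on the number of distinct output values that an accurate adder must produce. First I would write the two inputs as $X_1 = k_1/n$ and $X_2 = k_2/n$ with $k_1,k_2 \in \{0,1,\dots,n\}$, so that the scaled sum is $(X_1+X_2)/2 = (k_1+k_2)/(2n)$. As $k_1$ and $k_2$ each range over $\{0,\dots,n\}$, the integer $k_1+k_2$ attains every value in $\{0,1,\dots,2n\}$ (take $k_1=\min(s,n)$, $k_2=s-k_1$ for each target $s$); hence the set of expected output values is exactly $\{0/(2n),\,1/(2n),\,\dots,\,2n/(2n)\}$, a set of $2n+1$ pairwise distinct numbers.

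Next I would invoke the third defining property of BSC: the circuit is deterministic, so each of these input-value combinations is mapped to a well-defined value carried by the output stream, and by definition the value of a length-$m$ stream is its number of $1$'s divided by $m$. Thus a length-$m$ output stream can carry at most the $m+1$ distinct values $\{0/m,1/m,\dots,m/m\}$. For the adder to be fully-accurate it must realize all $2n+1$ required values, which forces $m+1 \ge 2n+1$, i.e.\ $m \ge 2n$. (Equivalently, one can isolate the single value $1/(2n)$ coming from $k_1=1,\ k_2=0$: writing it as $j/m$ with $0\le j\le m$ forces $m$ to be a positive multiple of $2n$, and again $m\ge 2n$.)

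Finally I would verify that the bound is tight, so that $2n$ is genuinely the minimum and not merely a lower bound: when $m=2n$, the required value $(k_1+k_2)/(2n)$ is precisely the value of a length-$2n$ stream containing $k_1+k_2$ ones, so every expected output is representable. The concrete circuits achieving $m=2n$ are deferred to the later sections.

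The only delicate point is the modeling convention that the output stream length is a fixed parameter of the design, common to all inputs — otherwise a trivial input such as $0/n$ could be answered with a one-bit stream and the statement would be vacuous. Once this is granted, the argument is an elementary counting step and I do not expect any real obstacle.
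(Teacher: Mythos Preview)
Your argument is correct. The paper's own proof is much shorter: it simply exhibits the single worst case $X_1=1/n$, $X_2=0/n$, observes that the required output value is $1/(2n)$, and concludes that a stream of length $2n$ is needed. This is exactly your parenthetical alternative.

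Your main route---counting the $2n+1$ distinct target values and invoking the $m+1$ values representable by a length-$m$ stream---is a slightly more systematic pigeonhole version of the same idea; it reaches the identical bound $m\ge 2n$ without having to spot the extremal input pair. You also add two things the paper leaves implicit: the tightness check at $m=2n$ and the remark that the output length must be a single design parameter common to all inputs. Both are worthwhile clarifications, though for a one-line lemma in an engineering paper the authors evidently felt the bare worst-case example sufficed.
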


\begin{proof}
Consider a worst case scenario for which one input takes the value of $ 1/n $ and the other one takes $ 0/n $. The output value should be $ 1/(2\times n) $ that requires a length of $ 2\times n $.
\end{proof}

\begin{Lem}
	\label{lem_Mult}
	Consider two input bit streams with lengths of $ n $. Suppose that the streams take $n+1$  values between $ 0/n$ and $ n/n $. Accurate multiplication of the inputs with BSC requires an output bit stream with a minimum length of $ n^{2} $.
\end{Lem}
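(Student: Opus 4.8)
The plan is to follow the same worst-case strategy used in the proof of Lemma~\ref{lem_Add}. First I would observe that if the two inputs take values $a/n$ and $b/n$ with $0 \le a,b \le n$, then by the third property of BSC the circuit must always output exactly the value $\frac{a}{n}\cdot\frac{b}{n}=\frac{ab}{n^{2}}$, independent of how the $1$'s and $0$'s are arranged in the input streams. An output stream of length $m$ can only represent values of the form $j/m$ for integer $0 \le j \le m$; hence exact representation of $ab/n^{2}$ forces $m$ to be a multiple of the denominator of $ab/n^{2}$ written in lowest terms.

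Next I would exhibit the worst case: take $a=b=1$, so that the required output value is $1/n^{2}$, which is already irreducible. If the output length were $m<n^{2}$, then $1/n^{2}=j/m$ would give $m=j\,n^{2}\ge n^{2}$ for every integer $j\ge 1$ (and $j=0$ yields the value $0\neq 1/n^{2}$), a contradiction. Therefore no output stream shorter than $n^{2}$ bits can represent the product in this case, which establishes the lower bound claimed in the lemma.

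Finally, to confirm that $n^{2}$ is not merely necessary but is exactly the minimum length that always works, I would note that for every admissible pair of inputs the product satisfies $0\le ab\le n^{2}$, so $ab/n^{2}$ is representable by a stream of length exactly $n^{2}$ (with $ab$ ones). Combining the two directions shows $n^{2}$ is the minimum output stream length for accurate multiplication.

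The argument is elementary; the only point that needs to be stated carefully is the divisibility step, namely that a length $m$ which is not a multiple of $n^{2}$ genuinely cannot encode $1/n^{2}$, together with the appeal to the third BSC property to rule out any ordering- or distribution-dependent shortcut. I do not anticipate a serious obstacle beyond presenting these observations cleanly.
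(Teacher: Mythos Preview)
Your proof is correct and follows essentially the same worst-case approach as the paper, which simply considers both inputs equal to $1/n$ and notes that the output $1/n^{2}$ forces an output length of $n^{2}$. You have added a more careful divisibility justification and the sufficiency direction, but the core idea is identical.
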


\begin{proof}
	Consider a worst case scenario for which both inputs take the value of $ 1/n $.  The output value should be $ 1/n^{2} $ that requires a length of $ n^{2} $.  
\end{proof}

\begin{Theo}
	\label{thr_seccussive}
Consider a system performing BSC such that input and output stream lengths are $n$ and $m$, respectively where $n<m$. If the system's  current reaction time or delay is independent of the past, it cannot correctly process successive input streams.  
\end{Theo}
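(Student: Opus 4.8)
The plan is to derive a contradiction from a throughput (rate) mismatch: with successive inputs, one block of $n$ bits arrives every $n$ bit‑durations, whereas every correct answer needs $m>n$ output bits, so the output channel is being asked to carry answers at a rate of $m/n>1$ bits per bit‑duration. The only way a circuit can survive this mismatch is to fall progressively further behind, i.e.\ to let the delay between an input block and its answer grow with the number of blocks already processed. A system whose reaction time does not depend on the past cannot do this, and I would turn that sentence into a formal contradiction as follows.

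First I would fix the timeline. Suppose, for contradiction, that the system correctly processes an arbitrarily long sequence of successive input blocks $I_1,I_2,\dots$, each of length $n$, so that on the common bit clock the bits of $I_k$ occupy slots $(k-1)n,\dots,kn-1$. By the definition of BSC the correct answer for $I_k$ is carried by an $m$-bit block $O_k$; let $a_k$ be the slot at which $O_k$ starts, so $O_k$ occupies slots $a_k,\dots,a_k+m-1$. Two distinct output blocks cannot occupy the same slot and must appear in the input order, so $a_{k+1}\ge a_k+m$, and iterating gives $a_k\ge a_1+(k-1)m$ with $a_1\ge 0$.

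Next — the step that actually uses the hypothesis — I would bound the delay. Because the current reaction time is independent of the past, it depends only on the current input block; since there are only finitely many length-$n$ blocks, there is a single constant $D$ such that, for every $k$, the block $O_k$ is completely emitted within $D$ slots after $I_k$ is fully available, i.e.\ $a_k+m-1\le (kn-1)+D$, so $a_k\le kn+D-m$. Combining with $a_k\ge a_1+(k-1)m$ yields $a_1+(k-1)m\le kn+D-m$, which simplifies to $k(m-n)\le D-a_1\le D$. Since $n<m$ forces $m-n\ge 1$, this says $k\le D$ for every $k$, contradicting the fact that $k$ ranges over arbitrarily large integers; hence no such system exists.

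I expect the main obstacle to be the third step: making precise what "current reaction time or delay is independent of the past" means and extracting from it a uniform, block‑index‑independent delay bound $D$ that is valid all along an unbounded stream of successive inputs. Once that bound is in hand the argument is elementary — just the observation that output blocks cannot overlap and must keep their order — and the timeline set‑up in the second step is only bookkeeping about where successive blocks sit on the time axis.
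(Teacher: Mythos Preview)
Your argument is correct, and the underlying idea---output blocks cannot overlap, yet a history-independent delay forces them to---is the same as the paper's. The executions differ, however. The paper interprets ``delay independent of the past'' as a single fixed constant $d$ (the same for every block), considers just two successive input sets, and observes that the first output finishes $m-n+d$ bit-durations after the first input ends while the second output must already begin $d$ bit-durations after that same instant; equating these forces $m=n$, a contradiction. You instead read the hypothesis more weakly (the delay may depend on the current block, hence is only uniformly bounded by some $D$), which obliges you to run an asymptotic rate argument over arbitrarily many blocks to squeeze out the contradiction $k(m-n)\le D$. Your route buys robustness---it covers systems whose reaction time varies with the present input---at the cost of a longer bookkeeping step; the paper's two-block argument is shorter but leans on the stricter reading of a literally constant~$d$.
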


\begin{proof}
Suppose that the system has a delay of $d$ bits ($ d $ may be fractional).    It means that after applying input bit streams, the system needs to wait for  a time duration of $ d $ bits to have the first output bit. Consider two sets of successive input bit streams. After the completion of the first set, the system needs more time equivalent to $m-n+d$ bits to have the output in full. However, we know that after the time duration of $d$ bits, the output starts to have the results for the second input set. This is illustrated in Fig. \ref{fig_TimeSeriesDifferent}. As a result, $ m-n+d=d $, and $m=n$ should be satisfied to obtain correct results.
\end{proof}

\begin{figure}[!t]
	\centering
	\includegraphics{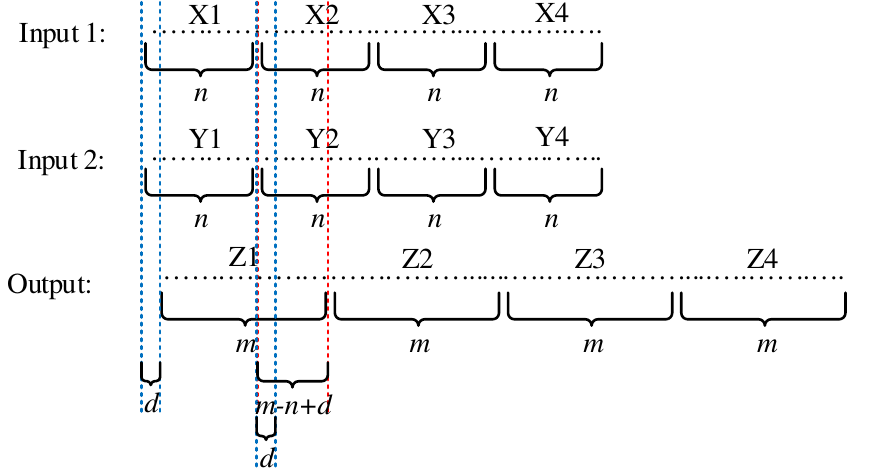}
	\vspace{-20pt}
	\caption{Demonstration of processing successive input bit streams.}
	\vspace{-0pt}
	\label{fig_TimeSeriesDifferent}
\end{figure}

Theorem \ref{thr_seccussive} leads to two solutions for successive processing of bit streams. The first one is controlling the system's delay sequentially. For example in Fig. \ref{fig_TimeSeriesDifferent}, for the first, the second, and the third set of input streams, the delay should be $d$, $d+m-n$, and $d+(2m-n)$ bits, respectively. Implementing such a complex and sequential system certainly kills the area advantage of BSC. The second solution is having same stream lengths for the inputs and outputs. This solution is much better not just for the area, but also for its suitability for multi-level designs. In this study, we use the second solution.

Using same stream lengths might result in inaccurate outputs. From Lemma \ref{lem_Add} and Lemma \ref{lem_Mult}, we know that we cannot achieve accurate addition and multiplication by using the same lengths if input bit streams are in full resolution meaning that they can take all possible values. For example, suppose that input and output stream lengths are 16, and multiplication is performed. If both inputs have values of 3/16, the correct result should be 9/256 or 0.5625/16, but we can only get either 0/16 or 1/16 from the output, so there is an error. To minimize the error, we round the output value to the nearest integer. In this example, the rounded result is 1/16.
Considering this accuracy issue, we classify the proposed asynchronous and synchronous circuits as semi-accurate with constant stream lengths and fully-accurate with increasing stream lengths. This is illustrated in Fig. \ref{fig_ProposedDesigns}. Note that fully-accurate ones are not proper for successive processing.
 
\begin{figure}[!t]
	\centering
	\includegraphics[scale=0.95]{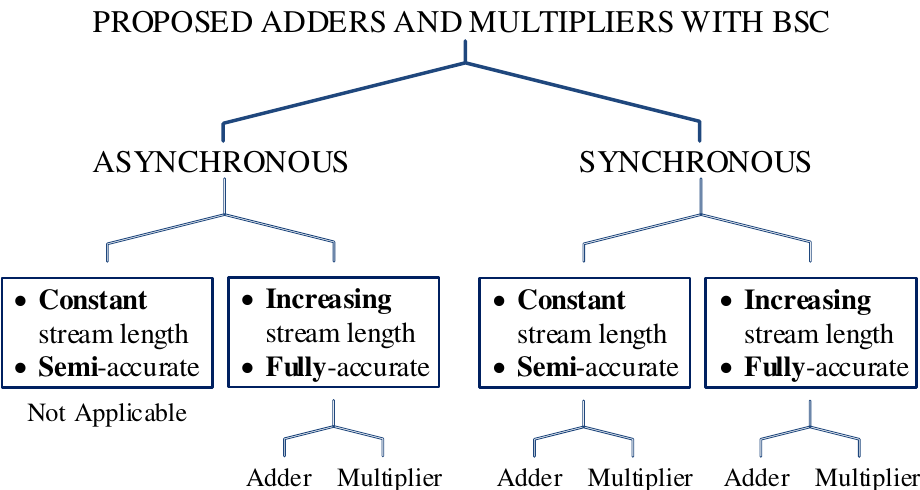}

	\caption{Summary of the proposed adder and multiplier designs.}

	\label{fig_ProposedDesigns}
\end{figure}
\section{Asynchronous Adders and Multipliers} \label{Async}

First, we clarify why we do not use constant stream lengths for asynchronous circuits as stated in Fig. \ref{fig_ProposedDesigns}. Different from fully-accurate adders and multipliers with increasing stream lengths, semi-accurate adders and multipliers with constant stream lengths do not always have a change in their output values if one of the input values changes. Therefore using a constant stream length needs  decision making of whether or not changing the output value or correspondingly whether or not changing the output bits. This requires to store and reuse of previously processed input bits in current operations, done with sequential circuits, and it is costly for asynchronous circuits. Therefore, we do not design asynchronous circuits with constant stream lengths. 

As an example, consider an adder performing BSC with a constant stream length of four. Suppose that applying input values 0/4 \& 0/4, 0/4 \& 1/4, and 1/4 \& 1/4 result in output values of 0/4, 0/4, and 1/4, respectively. Consider input streams 1,0,0,0 and 0,0,0,1. Here, the adder should store the information of the first received 1 without assigning any 1 to output bits until processing 
the last input bits, so there is a store and reuse operation. At the end the output stream becomes 0,0,0,1. 

For our designs we should also consider the following limitation.

\begin{Lem}
	\label{lem_Complexity}
	Consider a fully-accurate asynchronous system performing BSC such that it has input streams with  lengths of $n$ and an output stream with a length of $m$ where $n<m$. The system should consist of circuit elements having total of more than $ m-n $ outputs.  	
\end{Lem}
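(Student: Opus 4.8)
The key idea is a counting argument on the distinct behaviors the system must exhibit as its inputs vary. The plan is to show that a fully-accurate BSC system must be able to produce at least $m-n+1$ genuinely different output streams corresponding to at least $m-n+1$ distinct input configurations that differ only in their tail bits, and then to translate this requirement into a lower bound on the number of circuit-element outputs (i.e., internal nodes / latched signals) the system must contain.

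First I would fix the inputs so that all bits except a suffix are held constant, and consider the family of input configurations obtained by sliding the "position of the last 1" (or, more carefully, by choosing how many 1's fall into the final window of bits). Because the system is fully-accurate and $n < m$, the output value must change in response to these changes: as shown in the discussion preceding the lemma, an input like $1,0,0,0$ versus $0,0,0,1$ forces the system to produce output $0,0,0,1$ in one case while any shift of the 1 among the input positions can force a different output bit pattern. More precisely, since the output stream has length $m > n$ and must faithfully encode a value whose resolution exceeds what $n$ bits can distinguish once we also account for where 1's are placed relative to the output window, there are at least $m-n+1$ input situations that must map to pairwise distinct output streams — because the "extra" $m-n$ output positions beyond $n$ must each be independently controllable by the input history in at least one scenario. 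I would make this precise by exhibiting $m-n+1$ explicit input/output pairs (generalizing the length-four adder example) that a fully-accurate system is obliged to realize.

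Next I would argue that realizing $m-n+1$ distinct output streams that are forced by distinct input histories requires the circuit to retain, at some internal interface, enough state to distinguish those histories: since the computation is asynchronous and the distinguishing information must be carried forward from earlier bits to later output bits, the circuit elements collectively must present more than $m-n$ output wires. Intuitively, each of the $m-n$ "surplus" output bits must be driven by a chain of elements that together encode a decision depending on a different amount of accumulated history; if the elements had only $m-n$ or fewer outputs in total, a pigeonhole argument would collapse two of the required distinct behaviors into one, contradicting full accuracy. I would phrase this as: the map from input suffixes to the required output suffixes has image of size $> m - n$, so the number of distinct signal values that must be simultaneously present (hence the number of element outputs) is $> m-n$.

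The main obstacle I anticipate is making the counting step rigorous without over-claiming: "number of circuit elements' outputs" is a slightly informal quantity, and I need to pin down that distinguishing $m-n+1$ forced behaviors genuinely costs $m-n+1$ (hence $>m-n$) distinct element outputs rather than, say, $\lceil \log_2(m-n+1)\rceil$ — the unary nature of BSC streams is what rules out the logarithmic encoding, since the output itself must be produced bit-by-bit in unary and each surplus output bit must be separately generated and timed. So the crux is to exploit property (2) of BSC (unary value encoding) together with the asynchrony assumption to force a one-output-per-surplus-bit accounting, and I would spend most of the effort there; the construction of the explicit $m-n+1$ witnessing input/output pairs, by contrast, should be a routine generalization of the worst-case instances already used in Lemmas \ref{lem_Add} and \ref{lem_Mult}.
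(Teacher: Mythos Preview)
Your approach is an information-theoretic counting argument, whereas the paper's proof is a much simpler temporal/storage argument: at the instant the $n$th input bit has been consumed, the output has emitted (at most) $n$ bits, so the remaining $m-n$ output bits must already be ``in flight'' inside the asynchronous circuit; each such bit must reside at the output of some circuit element, and together with the system's own output this forces strictly more than $m-n$ element outputs. That is the entire proof --- one sentence plus an inverter example.

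The gap in your proposal is exactly the one you flag yourself and do not resolve: distinguishing $m-n+1$ input histories does \emph{not}, by itself, require $m-n+1$ wires --- $\lceil\log_2(m-n+1)\rceil$ bits of state would suffice to encode them. Your attempted escape (``the unary nature of BSC streams rules out the logarithmic encoding'') is not what actually does the work; nothing about the output being unary prevents the \emph{internal} state from being binary-encoded. What forces the linear count is the asynchrony: there is no clock, so the only way to hold a future output bit is to have it physically sitting at some gate output, propagating through a delay chain. That is a timing/buffering constraint, not a distinguishability constraint, and your counting-of-behaviors framework does not capture it. I would abandon the pigeonhole-on-output-streams plan and argue directly from the snapshot of the circuit at time $n$.
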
 

\begin{proof}
At the time the output has its $ n $th bit, the remaining $ m-n $ bits should be kept in the system that necessitates $ m-n $ outputs excluding the output of the system. As an example, suppose that $n=2$ and $m=4$, and the system consists of inverters. Here, we need at least three outputs necessitating three inverters.
\end{proof}

Considering arithmetic operations and their stream length specifications previously given in Lemma \ref{lem_Add} and Lemma \ref{lem_Mult}, using Lemma \ref{lem_Complexity} we conclude that more than $ n $ and $ n^{2}-n $ outputs are needed for fully-accurate adders and multipliers, respectively, where $n$ is the length of input streams. Since we use inverters as circuit elements to achieve desired delay values,
more than $ n $ and $ n^{2}-n $ inverters are needed for  for fully-accurate adders and multipliers, respectively. Another restriction is that the number inverters should be even to prevent negation at the outputs. 

To select a proper inverter structure, we consider three criteria: 1) its circuit area should be small in harmony with the area advantage of BSC; 2) its rise and fall times should be small to preserve signal integrity; and 3) it should be controllable to compensate for changes in delay values. Considering different options, three inverter structures come forward, shown in Fig. \ref{fig_PnControlled}. Delay control of the conventional inverter, shown in Fig. \ref{fig_PnControlled} a), can be achieved by VDD scaling. For better control, VN and VP analog voltage inputs can be used as shown in Fig. \ref{fig_PnControlled} b). Additionally, to improve signal integrity, the inverter in \ref{fig_PnControlled} b) can be cascaded with a Schmitt trigger as shown in Fig. \ref{fig_PnControlled} c) \cite{mahapatra2002comparison}. Among these three options, we prefer the first conventional one for the sake of simplicity and area efficiency. 

\begin{figure}[!t]
	\centering
	\includegraphics{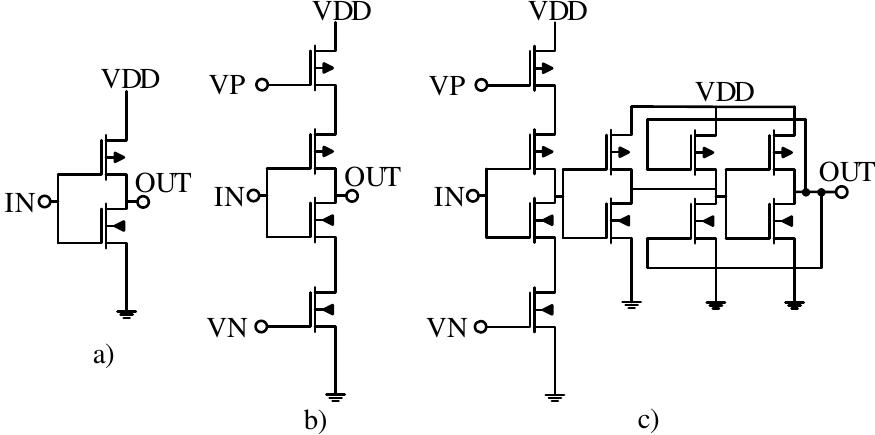}
		\vspace{-15pt}
	\caption{Inverter as a delay element: a) conventional inverter, b) NP-voltage controlled inverter, and c) its cascaded version with a Schmitt trigger.}
	\label{fig_PnControlled}
\end{figure}

\subsection{Increasing Stream Length: Fully-accurate Addition}

The proposed adder includes a delay block and an OR gate as shown in Fig. \ref{fig_AsyncAdd}. The delay block is used to postpone one of the inputs with a delay amount of the time duration of the input stream that can be calculated as (input stream length $n$)$\times$(bit duration). To satisfy this amount and  Lemma \ref{lem_Complexity}, we need to use at least $n+1$ inverters if $n$ is an odd number, and $n+2$ inverters if $n$ is an even number. As a result the area complexity is $O(n)$.

\begin{figure}[!t]
	\centering
	\includegraphics{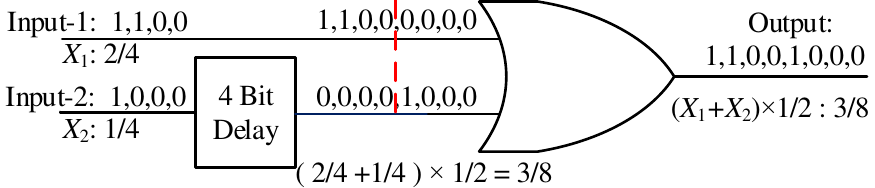}
%	\vspace{-15pt}
	\caption{The proposed asynchronous adder with inputs having four bits ($n=4$).}
	\label{fig_AsyncAdd}
\end{figure}

\subsection{Increasing Stream Length: Fully-accurate Multiplication}

Accurate multiplication cannot be achieved unless each bit in one of the input streams is multiplied with each and every bit in the other one. Therefore, total of $ n^2 $ operations are needed where $n$ is the input stream length. We satisfy this by applying delays to the input streams for $2n-1$ different cases; $1$ case for no delay, $n-1$ cases for delaying one of the inputs more than the other one, and $n-1$ cases for the opposite. After making multiplications with AND gates, an OR gate is used to combine the results.  This is illustrated in Fig. \ref{fig_AsyncMultPre} for $n=3$. Here, there are total of 5 cases corresponding to 5 AND gates. Fig. \ref{fig_AsyncMult}  shows the circuit structure.

In general for $ n $ bit inputs, the circuit is constructed in four steps:
\begin{enumerate}

	\item The inputs are ANDed without any delay (corresponding to the AND gate numbered 1 in Fig. \ref{fig_AsyncMultPre}).
	\item Last $ n-i $ bits of Input-1 and first $n-i$  bits of  Input-2  are ANDed successively for $i=1, 2 ,...,n-1$, corresponding to the AND gates numbered 2 and 3 in Fig. \ref{fig_AsyncMultPre}. Total of $ n-1 $ AND gates are used for these operations with $n-1$ delay blocks for Input-1 and $n-1$ delay blocks for Input-2, corresponding to the 2 delay blocks in the upper part and the 2 delay blocks in the lower part of the circuit in Fig. \ref{fig_AsyncMult}.
	\item First $ n-i $ bits of Input-1 and last $n-i$  bits of  Input-2  are ANDed successively for $i=n-1, n-2,...,1$, corresponding to the AND gates numbered 4 and 5 in Fig. \ref{fig_AsyncMultPre}. Total of $ n-1 $ AND gates are used for these operations with $n-1$ delay blocks for Input-1 and $n-1$ delay blocks for Input-2, corresponding to the 2 delay blocks in the upper part and the 2 delay blocks in the lower part of the circuit in Fig. \ref{fig_AsyncMult}.
	
	\item Outputs of the all $ 2n-1 $ AND gates are ORed with a $ 2n-1 $ fan-in OR gate. The output of this OR gate gives the accurate result.
\end{enumerate}

Delay difference between the inputs of $ i $th and $ i-1 $th AND gates, representing the delay of the corresponding block, can be generalized as follows:
\[\text{For Input-1 }  \begin{cases}
	0 & i=1 \\
	n-(i-1)  & i=2,3,\dots,n \\
	n & i=n+1 \\
	i-(n+1) & i=n+2,n+3,\dots,2n-1
	\end{cases}
	\]

\[\text{For Input-2 }  \begin{cases}
	0 & i=1 \\
	n-(i-2)  & i=2,3,\dots,n \\
	-(n-2) & i=n+1 \\
	i-n & i=n+2,n+3,\dots,2n-1
	\end{cases}
	\]
Note that $ (n+1) $th case for the second input has a negative value  meaning  that it needs less delay than that of $ n $th case (see Fig. \ref{fig_AsyncMultPre} and Fig. \ref{fig_AsyncMult}).

%\begin{gathered}
% \\
%Input-2: No Delay
%\end{gathered}

%\[ \begin{cases} 
%0 & x\leq 0 \\
%\frac{100-x}{100} & 0\leq x\leq 100 \\
%0 & 100\leq x 
%\end{cases}
%\]

\begin{figure}[!t]
	\centering
	\includegraphics{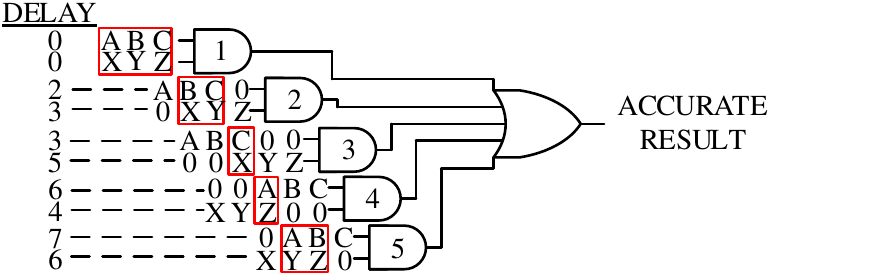}
%	\vspace{-15pt}
	\caption{Elucidation of the proposed asynchronous multiplier for 3 bit inputs.}
	\label{fig_AsyncMultPre}
\end{figure}

\begin{figure}[!t]
	\centering
	\includegraphics{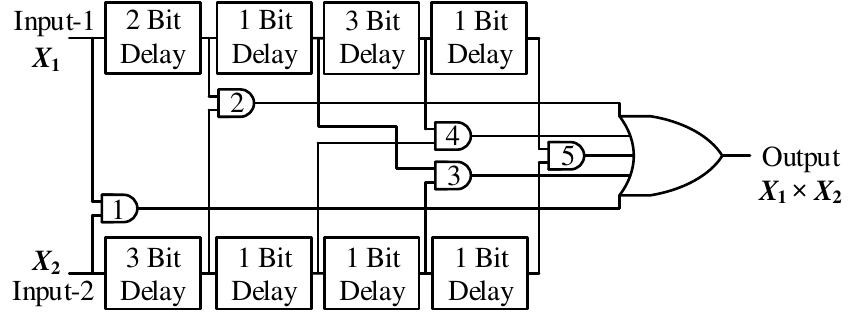}
	\caption{The circuit structure of the proposed asynchronous multiplier for 3 bit inputs.}
	\label{fig_AsyncMult}
\end{figure}

As a result, the delay blocks offer total of $n^2-(n-1)$ and $n^2-(n)$ bit delays for Input-1 and Input-2, respectively.  Therefore, we need at least $\approx 2n^2$ inverters to realize the delay blocks regarding that each delay block should have even number of inverters. The area complexity is $O(n^2)$.

\section{Synchronous Adders and Multipliers} \label{Sync}

The proposed asynchronous circuits are easy to design with delay controllability features. However, their area quickly grows with the input stream length $n$; for high $n$ values the circuits become inefficiently large.
 Also, they cannot process successive input streams. To solve these problems, we proceed to synchronous designs. The existence of auxiliary signals allows to keep and process the information carried by streams with binary digits.
 % Additionally, binary digits can be easily reconverted to streams. 
 Indeed, the resulting circuits are hybrid with processing both streams and binary digits. 
 
We have two classes for the proposed synchronous designs that have increasing and constant stream lengths. While the former one is fully-accurate, similar to the proposed asynchronous ones, the latter one concedes slight errors with an important plus of being able to process successive input streams. As a result, we propose four adders and multipliers that are thoroughly explained in the following four subsections.

\subsection{Increasing Stream Length: Fully-accurate Addition}
%Thanks to the possibility to use auxiliary signals, the limitation coming from Lemma \ref{lem_Complexity} (additional nodes) is not mandatory anymore. Nonetheless, Lemma \ref{lem_Add} and \ref{lem_Mult} (output stream lengths) are still standing because of 100\% accuracy goal.

We mainly use the same approach as we previously use for our asynchronous adder: one of the input streams waits until all bits of the other one is processed. Instead of using an asynchronous delay block as in Fig. \ref{fig_AsyncAdd}, we use synchronous blocks to store the input information in binary format that is more area efficient especially for large stream lengths. 

Fig. \ref{fig_SyncAccAdd} shows the proposed adder for an input stream length $n=8$; $ X_{1} $ and $ X_{2} $ represent the input values ranging between $ 0/8 $ and $ 8/8 $. Note that the stream length of the output should be $ 16 $ for accurate operation as previously stated in Lemma \ref{lem_Add}. The proposed adder first turns $X_1$ into a binary format via the counter. After the completion of the counting process, the register saves the information in the output of the counter. Then, binary to stream converting is done by the multiplexer. Finally, addition is performed with an OR gate. 

Even though the $ 8 $ bit stream corresponds to $ 3 $ bit binary resolution, the counter and the register are both selected $ 4 $ bit to be able to represent all 9 values between $ 0/8 $ and $ 8/8 $. Also, that is the reason why  OR gates are used before the multiplexer. The largest binary value coming from the outputs of the register $R_3 R_2 R_1 R_0$ is 1000 that should produce 1's for all bits in the stream at the output of the multiplexer. To do so, $R_3$ is ORed with other $R$'s.

For the multiplexer, along with the 4 inputs $ I_{0} $, ..., $ I_{3} $ coming from the register, there is one more input $ I_{4}$ used to make the output of the multiplexer logic 0 for a time duration of $n$ bits, needed to process the input stream coming from Input-2. Table \ref{tab_SyncAddMux} shows the relation between the data inputs, the selection inputs, and the output of the multiplexer. All selection inputs are actually clock signals with 50\% duty cycles. They can be generated from a single CLK input via frequency division with flip-flops, as shown in Fig. \ref{fig_FreqDiv}. Additionally, the $ TRIG $ input of the register is selected as $ \overline{S_{3}} $. 

If the input streams have $n$ bits, the counter and the register should be $ \log_2 n +1$ bit, and the multiplexer should have $ \log_2 n +2$ data inputs and $ \log_2 n +1$ selection inputs. Also, all auxiliary signals could be generated from a frequency divider circuit consisting of $ \log_2 n +1$ successive flip-flops. As a result, the area complexity is $O( \log n)$.
\begin{figure}[!t]
	\centering
	\includegraphics{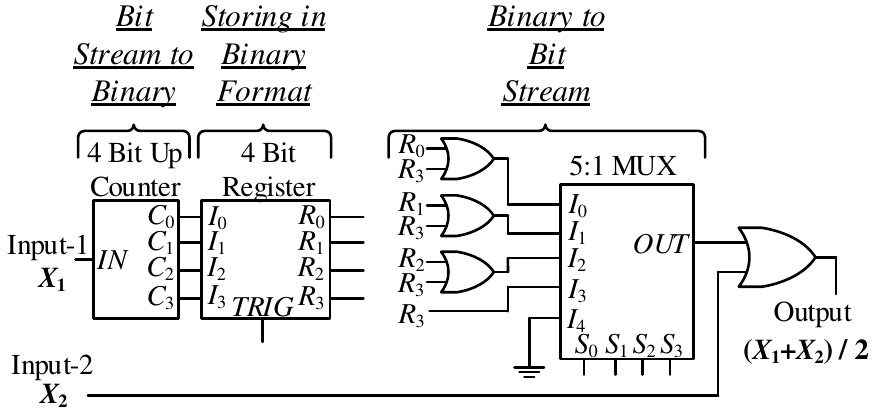}
	\vspace{-15pt}
	\caption{The proposed synchronous fully-accurate adder for 8 bit inputs.}
	\vspace{-5pt}
	\label{fig_SyncAccAdd}
\end{figure}

\begin{table}[!t]
	\caption{Relation between the selection inputs and the output of the 5:1 multiplexer in Fig. \ref{fig_SyncAccAdd}}
	\label{tab_SyncAddMux}
	\centering
	\begin{tabular}{|c|c|c|c||c|c|}
		\hline
		\multicolumn{4}{|c||}{\textbf{SELECTION INPUTS}}  & \multicolumn{2}{c|}{\textbf{OUTPUT}}  \\ 
		\hline
		\bm{$ S_3 $} & \bm{$ S_2 $} & \bm{$ S_1 $} & \bm{$ S_0 $} &  \textbf{Equivalence} & \textbf{Duration}\\ \hline
		0 & 0 & 0 & 0 &  $ I_{0} $ & 1 bit\\ \hline
		0 & 0 & 0 & 1 &  $ I_{3} $ & 1 bit\\ \hline
		0 & 0 & 1 & X &  $ I_{1} $ & 2 bit\\ \hline		
		0 & 1 & X & X &  $ I_{2} $ & 4 bit\\ \hline
		1 & X & X & X &  $ I_{4} $ & 8 bit\\ \hline				
	\end{tabular}
\end{table}

\begin{figure}[!t]
	\centering
	\includegraphics{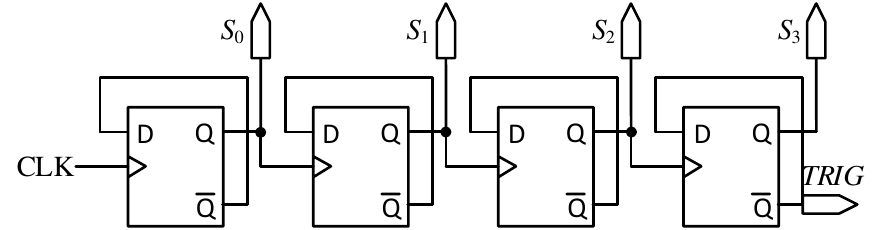}
			\vspace{-15pt}
	\caption{The generation circuitry of the auxiliary signals in Fig. \ref{fig_SyncAccAdd}}
		\vspace{-10pt}
	\label{fig_FreqDiv}
\end{figure}

\subsection{Increasing Stream Length: Fully-accurate Multiplication}

Consider two input bit streams with lengths of $n$. As mentioned earlier, accurate multiplication requires $ n^2 $ bitwise operations that is in compatible with  Lemma \ref{lem_Mult}. We satisfy this by repeating one of the streams $n$ times, and by repeating each bit of the other stream $n$ times. An example for $n=4$ is shown in Fig. \ref{fig_SyncAccMultPre}. Note that the orders of 0's and 1's in the input streams are not reflected to the repeated streams; after the counting process, we only have the information of the input values $ X_{1} $ and $ X_{2} $, not the orderings. In the example, both of the input streams are treated as (0,1,1,1) since $ X_{1} = X_{2} $.

 The circuit implementation of the proposed multiplier for $n=4$ is given in Fig. \ref{fig_SyncAccMult}. The counting and reconversion circuitry is nearly same with the one in Fig. \ref{fig_SyncAccAdd}. The only difference is the number of inputs in multiplexers (one less), because there is no need to wait for one of the streams as we do for the adder. The selection inputs of the upper multiplexer ($ S_{0} $, $ S_{1} $) are 4 times faster than those of the lower ones ($ S_{2} $, $ S_{3} $). Additionally, the \textit{TRIG-1} input can be selected as the negated form of $S_{2} $, and the \textit{TRIG-2} input is the negated form of the two times slowed version of $ S_{3} $. Therefore, all auxiliary inputs can be generated from a single clock signal by using a frequency divider circuit having 5 successive flip-flops.

Analyzing the scalability of the proposed multiplier, we see that the counters and registers should be $ \log_2 n +1$ bit, while the multiplexer should have $ \log_2 n +1$ inputs. Furthermore, $ \log_2 n +2$ successive flip-flops needed to generate all required auxiliary signals. Therefore, the area complexity is $O( \log n)$.

\begin{figure}[!t]
	\centering
	\includegraphics{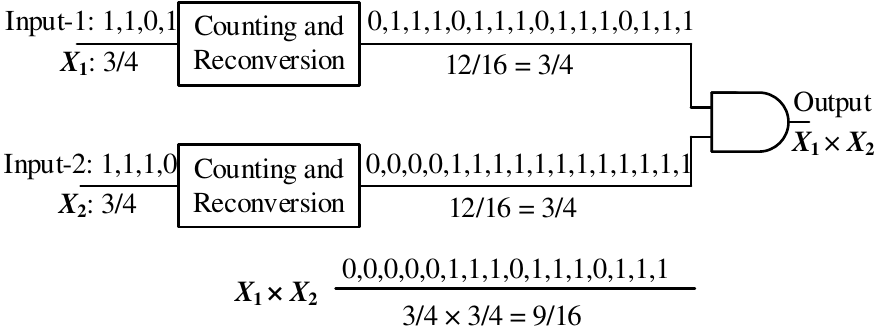}
	\vspace{-15pt}
	\caption{The proposed multiplication scheme for 4 bit inputs.} 
	\label{fig_SyncAccMultPre}
\end{figure}

\begin{figure}[!t]
	\centering
	\includegraphics{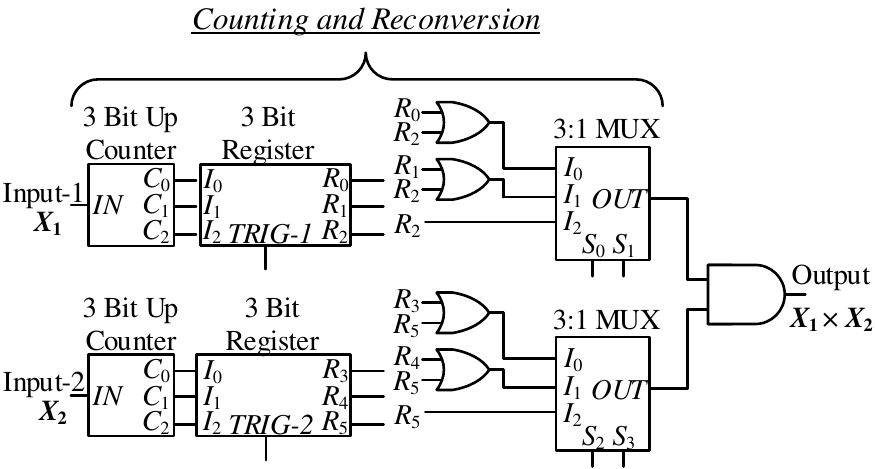}
	\vspace{-15pt}
	\caption{The proposed fully-accurate synchronous multiplier for 4 bit inputs.} 
	\label{fig_SyncAccMult}
\end{figure}

\subsection{Constant Stream Length: Semi-accurate Addition}
%Our all proposed designs up to this point are 100\% accurate thanks to the inequality of their input-output stream lengths. However, these designs can not work in time-series systems as stated in Section \ref{Lim_Accuracy}. Therefore, we proposed inaccurate scaled addition and multiplication circuits whose input-output lengths are equal so as to be compatible with time-series systems. In fact, although these designs are not inaccurate, their accuracy is best-possible in case of equal length input-output streams, i.e. compatibility for time-series systems. Their accuracy results are exactly same with the optimal scenarios explained in Section \ref{Lim_Accuracy}.

Addition in BSC is having an average of the input values $ X_{1}$ and $X_{2} $, and for the constant stream length, this can be performed with bit-by-bit averaging of the input stream bits. However, since the average of 1 valued and 0 valued bits results in 0.5 and it can not be represented with a single output bit, carry is needed to store the information. Table \ref{tab_OptSyncAdd} shows the truth table for such a solution. A circuit suiting Table \ref{tab_OptSyncAdd} should operate as desired. Fig. \ref{fig_InaccAdd} shows the circuit implementation of the proposed adder. It works as follows: if the input values $ X_{1}$ and $X_{2} $ are both even or both odd, then the result is correct; otherwise the result is the rounded version of the correct result with an error distance of $0.5/n$ where $n$ is the input stream length. Fig. \ref{fig_InaccAdd_ex} shows two examples for the proposed addition operation giving erroneous and accurate results. Since the circuit area  is constant, independent of $n$, the area complexity is $O(1)$.

Note that our circuit in Fig. \ref{fig_InaccAdd} has a quite similar performance compared to the scaled adder in \cite{lee2017energy}. However, with our point of view, we can generalize our adder for any $i$ number of inputs. 
%There is another work \cite{ting2017eliminating} offering a method to operate highly accurate multi-input scaled additions. However, it does not give the nearest possible result for every case since it needs to saturate the modulo-$n$ counter in order to produce 1. As an example for this method, 4-input scaled adder having three 0/4s and one 3/4 would give 0/4 when the stream length is 4, while the nearest possible result is 1/4 which is obtained by the proposed circuit in Fig. \ref{fig_4bitSCSA}. 
In our design, at first the input bits are counted in parallel, and the result is added to the carry which has an initial value of $ i/2 $ to eliminate probable negative carry values. If the carry is larger than $ i $, the output becomes 1, and $ i $ is subtracted from the carry. Otherwise, the output is 0. Fig. \ref{fig_4bitSCSA} shows $i=4$ version where ``Parallel Counter" simply counts 1's in the input streams, and ``Binary Adder \& Output" first adds the current carry value to the output of the counter, then determines the output and updates the carry value with aforementioned process steps. Note that the structure coincides with modular parallel incrementers in the work \cite{parhami1995accumulative} through using its carry-out and sum as output and updated carry values, respectively.
%\hlalt{To get rid of possible negative carry values, carry is started as $ i/2 $ and the threshold to make output 1 is made $ i $ so that it is guaranteed to be in the interval $ [0,i] $.}

%since it needs to saturate the modulo-n counter in order to produce  1. As an exemplary, with this method, 4-input scaled adder having three 0/4s and one 3/4 would give 0/4 when the stream length is 4, while the nearest possible result is 1/4 which can be obtained from the circuit in Fig. }\colorbox{red}{\ref{fig_4bitSCSA}.}

%Note that the circuit consists of a 1 bit memory element and logic gates. 

% When two or three of the "$ CARRY-INPUT_{1}-INPUT_{2}$" trio are '1', the output is 1. The carry bit does not change when zero or two of the trio is '1', otherwise it flips its previous value.

%In designing the adder, we aim to minimize the timing problems especially due to misalignments of gate inputs, previously illustrated in Fig. \ref{fig_RealizationErrors} b). That is why input of gates there is no alignment problem as previously defined n.   Thus,  Of course there are different circuit structures, but we 2-input bit stream computing scaled addition circuit designed in regard to the truth table in Table \ref{tab_OptSyncAdd}. The circuit has optimum accuracy for a time-series compatible design. It is also scalable, works regardless of the length of the streams.

\begin{table}[!tb]
\vspace{-5pt}
	\caption{Transition table of the proposed adder}
	\label{tab_OptSyncAdd}
	\centering
	\begin{tabular}{|c|c|c||c|c|}
		\hline
		\textbf{Carry} & \textbf{Input-1}  & \textbf{Input-2} & \textbf{Output}&  \textbf{Carry-new} \\ \hline
		X & 0 & 0 & 0 & Carry \\ \hline
		X & 0 & 1 & Carry & $ \overline{\mbox{Carry}} $\\ \hline	
		X & 1 & 0 & Carry & $ \overline{\mbox{Carry}} $\\ \hline
		X & 1 & 1 & 1 & Carry\\ \hline				
	\end{tabular}
\end{table} 

\begin{figure}[!t]
	\centering
	\includegraphics{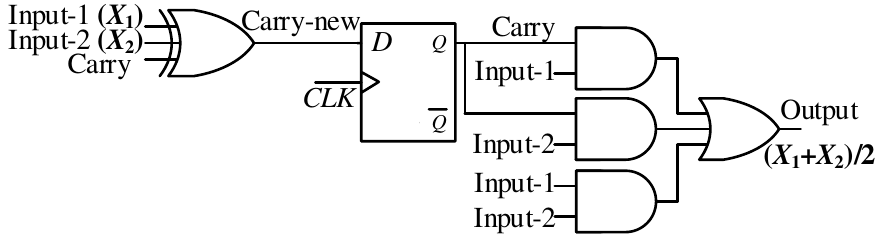}
	%\vspace{-15pt}
	\caption{The proposed semi-accurate synchronous adder for two inputs.}
		%\vspace{-7pt}
	\label{fig_InaccAdd}
\end{figure}

\begin{figure}[!t]
	\centering
	\includegraphics{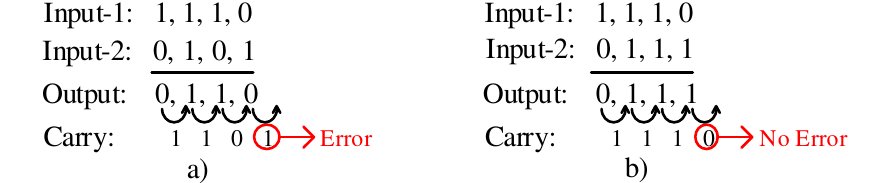}
%	\vspace{-18pt}
	\caption{Examples for the proposed addition operation: a) $X_1=3/4$ and $X_2=2/4$, and b) $X_1=3/4$ and $X_2=3/4$.}
	%\vspace{-10pt}
	\label{fig_InaccAdd_ex}
\end{figure}

\begin{figure}[!t]
	\centering
	\includegraphics{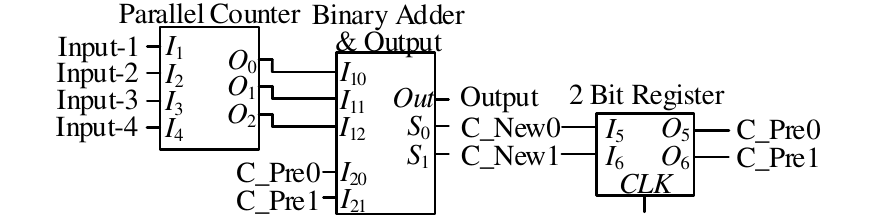}
	%\vspace{-7pt}
	\caption{The proposed semi-accurate synchronous adder for four inputs.}
	\label{fig_4bitSCSA}
\end{figure}

\subsection{Constant Stream Length: Semi-accurate Multiplication}
%For the multiplication operation, the less correlated input streams mean more accurate results. If the design has constant stream lengths, there is no chance to guarantee zero correlation.
%
%Nevertheless, the correlation can be minimized, i.e. the accuracy can be optimized by a method similar in Fig. \ref{fig_SyncAccMultPre}. The method in Fig. \ref{fig_SyncAccMultPre} repeats one of the $ n $-bit input streams $ n $ times as a whole stream, while it also repeats each bit of the other streams $ n $ times separately. Though this is impossible for the case of constant stream lengths, it can inspire us to find a way to optimize the accuracy of constant stream lengths designs. Note that while the real value of first regenerated input shows value in only whole $ n^{2} $-bit stream, each $ n $-bits of the second one are enough to find out the real value of it. Additionally, the distributions of bit sequences of regenerated streams are same when we compare $ n^{2} $-bit and $ n $-bit of them, respectively. We squeeze this method in $ n $-bits to obtain an optimized multiplication by regenerating the inputs via using the Algorithm \ref{alg_OptMult}.

We fundamentally use the same approach as we use for addition considering that multiplication is the repeated version of addition.  Suppose that input bit streams represent values $ X_{1}=a/n$ and $X_{2} =b/n$ where $n$ is the length of the streams. If we add $b$ copies of the first stream, or $a$ copies of the second, we can achieve multiplication by using bit-by-bit averaging with a carry. Different from the addition operation for which rounding to the nearest integer can be always satisfied with positive carry values, the multiplication operation with optimal error performance should have both positive and negative carry values between $-0.5n$ and $+0.5n$. Thus, we make the error distance upper bounded by $0.5/n$. Note that if we only used positive carries, this upper bound would be $1/n$.

Fig. \ref{fig_InaccMultEx} elucidates our multiplication scheme for 4 bit inputs. For example in Fig. \ref{fig_InaccMultEx} a), the first operation is adding three 1's with a result of 3; then $3/4$ is rounded to the nearest integer that is 1 as the output, and the carry with a value of -1 is transferred to the next bit operation for which three 1's and the carry results in $2/4$ that is rounded to 1 (it could have been rounded to 0 also) with a carry of -2 as an error.    

Instead of directly implementing the flow in Fig. \ref{fig_InaccMultEx} that requires to first process Input-2, and then depending on the value of it, process Input-1,  we process Input-1 and Input-2 separately. Thus, we can treat the inputs simultaneously by regenerating them with independent circuitries. We still satisfy the overall flow in Fig. \ref{fig_InaccMultEx} by achieving a faster, less complex, and smaller multiplier circuit.
In fact, regeneration of input signals to achieve better accuracy has been previously used in \cite{vahapoglu2016accurate} and \cite{jenson2016deterministic}. They manipulate the input streams to achieve 100\% accuracy at the output. However, these works produce larger output stream lengths than input ones that causes problems in processing successive input streams, as discussed in Section \ref{limits}.
%\hlalt{. Algorithm }\colorbox{red}{\ref{alg_OptMult}}\hlalt{ also decreases correlation as much as the stream length allow, therefore it guarantees to produce the nearest possible result.}

%\hl{With bit-by-bit AND operation of regenerated streams, the process in Fig.} \colorbox{yellow}{\ref{fig_InaccMultEx}}\hl{ is exactly imitated. $ RegIn2(n) $ is generated via addition of the multiplicand to the carry for each clock cycle. Therefore, it can be considered as the output when $ In1(n) $ consists of only 1s. Then it is ANDed with $ RegIn1(n) $, which includes first 1s, then 0s. In this way, the addition of multiplicand in $ RegIn2(n) $ is stopped at the output, when there is no 1s left in $ RegIn1(n) $. }

For the proposed multiplication scheme, we regenerate input streams such that one of them is lined up in a way to  process  all 1's first and then all 0's, and the other one is used as a multiplicand. In every bit-by-bit operation, the multiplicand is added to the carry. Algorithm \ref{alg_OptMult} demonstrates the steps regenerating input streams of $ In1(n) $ and $ In2(n) $  as $ RegIn1(n) $ and $ RegIn2(n) $, respectively. The generation of $ RegIn1(n) $ is quite simple: first all 1's, then all 0's. However, $ RegIn2(n) $ is generated with a more complex way. For each iteration, number of 1's in $ In2(n) $ is added to $ Carry $. If new $ Carry $ is larger than or equal to $ n/2 $, it is subtracted by $ n $ and RegIn2(i) becomes 1, otherwise $RegIn2(i)$ becomes 0. Note that for the first iteration $ Carry =0$. 

%Note that the optimal error performance is always guaranteed by rounding the expected output value to the nearest possible value.  

\begin{algorithm} [!t]
\footnotesize
	\caption{Regeneration of Inputs for Optimal Error Performance in Constant Stream Multiplication}
	\label{alg_OptMult}	
	\begin{algorithmic}[1]
		\Procedure{RegIn}{In1(n),In2(n)}=(RegIn1(n),RegIn2(n)) \Comment{Regeneration of $ n $-bit Input Streams}
		\State $ SumIn1\gets sum(In1) $
		\State $ SumIn2\gets sum(In2) $
		\State $ Carry\gets 0 $
		\For {$ i\gets 1,n $}
			\If {$ i\le SumIn1 $} \Comment{Determining $ i $th bit of the 1st Regenerated Input }
				\State $ RegIn1(i)\gets 1 $
			\Else
				\State $ RegIn1(i)\gets 0 $
			\EndIf
			
			\State $ Carry\gets Carry+SumIn2 $ \Comment{Addition of Multiplicand to the Carry} \label{alg_OptMult_CarrySum}
			\If {$ Carry\geq n/2 $} \Comment{Determining $ i $th bit of the 2nd Regenerated Input } 
				\State $ RegIn2(i)\gets 1 $
				\State $ Carry\gets Carry-n $ \label{alg_OptMult_CarrySubt}
			\Else
				\State $ RegIn2(i)\gets 0 $
			\EndIf
		\EndFor
		\EndProcedure
	\end{algorithmic}
	\vspace{-5pt}
\end{algorithm}

Fig. \ref{fig_InaccMultPre} shows two examples for the regenerations of input bit streams. The multiplication operation is completed by AND operation of the regenerated streams. In Fig. \ref{fig_InaccMultPre} a), since the expected output value of $ 1/8 $ can be represented with 8 bits, the accurate output value of $1/8$ is obtained. On the other hand, in Fig. \ref{fig_InaccMultPre} b), the expected output value of $21/64$ cannot be represented accurately with 8 bits. Therefore, the output gives the nearest possible value of $3/8$. 

 \begin{figure}[!t]
 	\centering
 	\includegraphics{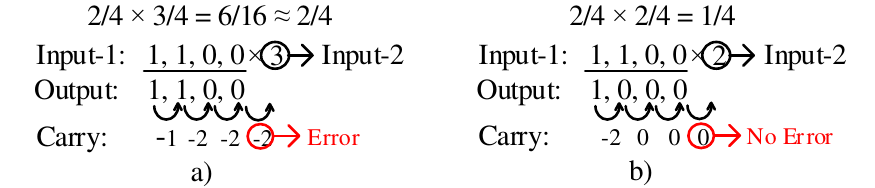}
 	\caption{Examples for the proposed multiplication approach: a) $X_1=2/4$ and $X_2=3/4$, and b) $X_1=2/4$ and $X_2=2/4$.}
 	 %	\vspace{-10pt}
 	\label{fig_InaccMultEx}
 \end{figure}

\begin{figure}[!t]
	\centering
%		\vspace{-10pt}
	\includegraphics{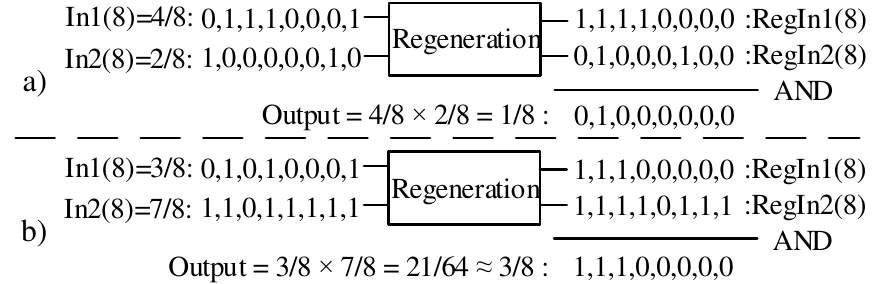}
	\vspace{-15pt}
	\caption{The regeneration of input streams with Algorithm \ref{alg_OptMult} with outputs giving a) no error, and b) optimal error.}
%		\vspace{-15pt}

	\label{fig_InaccMultPre}
\end{figure}

Fig. \ref{fig_InaccMult} shows a circuit to realize Algorithm \ref{alg_OptMult}. It has three exactly same 4-bit up counters. Each  has an input $ IN_{i} $; four output ports $ C_{ij} $ and their negates $ C_{ijB} $; and four clear and four preset inputs $ CLR_{i} $ and $ PRE_{i}$, respectively. For simplicity, unused ports are  not generally shown in the circuit. That is why counters look different though they are exactly same. Similarly, each of the three identical registers has four inputs $ I_{ij} $; four outputs $ R_{ij} $ and their negates $ R_{ijB} $; and clear and clock inputs $ CLR $ and $ CLK $ ($ CLK $ or $ TRIG $), respectively. Again unused ports are not shown. 

Inputs $ IN_{1} $ and $ IN_{2} $ are first counted by up counters. The information of $ IN_{1} $ is saved in the 4-bit register and then inversely loaded to the next 4-bit up counter via the $ CLR_{i} $ and $ PRE_{i} $ inputs generated from $ TRIG $, $ R_{1j} $, and $ R_{1jB} $. The signals $ CLR_{i} $ and $ PRE_{i} $ are connected to the $ CLEAR $ and $ PRESET $ inputs of the corresponding D-FF in the counter to transfer the inversion of the saved information in the register to the counter after the counting in the first counter is completed. This means that the counter starts to count from inversion of saved information instead of from ``0000". Then the most significant bit (MSB) of the counter $ C_{23} $ and negated version of other bits, $ C_{2iB} $ for $ i=0,1,2 $, determine $ REG\_IN_{1} $. If the counter output is between 0111 and 1110, $ REG\_IN_{1} $ becomes 1; otherwise it becomes 0. For instance, for the case of $ X_{1}=5/8 $, the saved information is 0101 and the transferred information is 1010. So, the counter starts from 1010 and arrives at 1111 after 5 clock cycles, which means first 5 bits of $ REG\_IN_{1} $ are 1 and the rest are 0. Thus, we generate $ REG\_IN_{1} $ with an up counter and some logic circuit instead of a costly digital comparator. Note that in order to eliminate latency hit, regenerated signals are delayed by extra flip-flops. Eventually, the circuit gives its first output bit 2 clock cycles after getting the last input bits.

\begin{figure*}[!t]
	\centering
	\includegraphics{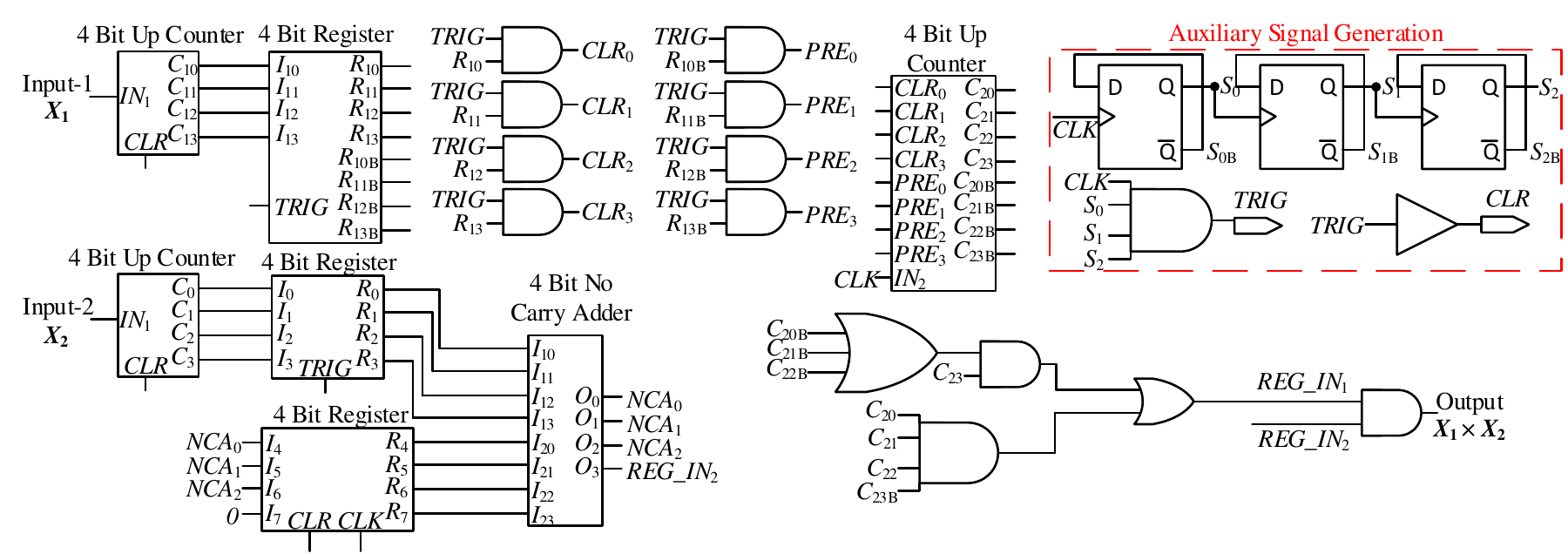}
	\vspace{-15pt}
	\caption{The proposed semi-accurate synchronous multiplier for 8 bit inputs.}
	\vspace{-5pt}
	\label{fig_InaccMult}
\end{figure*}

The saved information of $ IN_2 $ in the 4 bit register is summed with the previous sum in each clock cycle by a binary addition block so-called ``No Carry Adder" which excludes the MSB (or final carry) of a classical binary adder. This summation performs the operation $ Carry\gets Carry+SumIn2 $ in the line \ref{alg_OptMult_CarrySum} of Algorithm  \ref{alg_OptMult}. In other words, the inputs of the undermost register corresponds to $ Carry $. Note that $ Carry $ varies between $ [-n/2,n/2] $. To get rid of negative numbers, we start $ Carry $ from $ n/2 $, instead of 0, and shift the interval to $ [0,n] $ by starting the undermost register from 01..0, instead of from 00...0. Thus, if the MSB of the output of No Carry Adder is 1, which means the unshifted and shifted carries are larger than $ n/2 $ and $ n $, respectively, $ REG\_IN_{2} $ becomes 1. Otherwise, $ Carry $ is not large enough to produce 1, so $ REG\_IN_{2} $ becomes 0. Furthermore, the MSB of the input of the undermost register is always zero to implement the subtraction in the line \ref{alg_OptMult_CarrySubt} of Algorithm \ref{alg_OptMult}. If $ REG\_IN_{2} $ is 1 then it means that $ Carry\ge n $, so $ Carry $ should be subtracted by $ n $, i.e. MSB should be turned into 0. On the other hand, $ REG\_IN_{2} =0 $  does not require any change in $ Carry $, so the MSB should be again 0. 
%Since the largest value taken from the register is $(1000)_2$, without using MSB, rounding can be performed by the checking other bits. Two MSB outputs of the no carry adder, $ NCA_2 $ and $ NCA_3$ are used to check whether the rounding of the corresponding value of the output changes. Algorithm \ref{alg_OptMult} indicates that a change in the rounding means 1 in $ REG\_IN_2 $. Additionally, the ``Rounding Change Sensor" circuit in Fig. \ref{fig_InaccMult} is designed such that if there is a rounding change, it gives 1; otherwise it gives 0, as mentioned in the algorithm.

% are , both can be generated from the clock signal of the input streams ($ CLK $), as for the other proposed synchronous circuits. $ TRIG $ signal in Fig. \ref{fig_InaccMult} is used to trigger the registers coming after up counters, when counting process of the input stream is completed. It becomes one immediately after the counting and return to zero after half of a bit duration. Its period is equal to the length of input streams. $ CLR $ signal is used to reset certain counters and registers in the circuits, and is a slightly delayed version of $ TRIG $ signal. The generation of the auxiliary inputs is shown in the red box in Fig. \ref{fig_AuxInaccMult}.

The circuitry to generate auxiliary inputs $ TRIG $ and $ CLR $ is also added in Fig. \ref{fig_InaccMult}. Essentially, $ CLR $ is the slightly delayed version of $ TRIG $, because registers using $ TRIG $ signal need to save the outputs of the leftmost counters before they are cleared. The proposed design in Fig. \ref{fig_InaccMult} can be generalized for $ n $ bit inputs. Counters, registers, and binary adders should be $ \log_2 n +1 $ bit. Additionally, $ 2\times(\log_2 n +1 )$ AND gates are needed for producing $ CLR $ and $ PRE $ signals, and  $ log_2 n $ successive flip-flops are used in auxiliary signal generation. As a result, the area complexity becomes $O( \log n)$.

	\vspace{-5pt}
\section{Experimental Results} \label{Exp}
In this section, we evaluate the proposed six circuits:

\begin{itemize}  
\setlength{\itemindent}{-.15in}
\item \textbf{A}synchronous \textbf{I}ncreasing \textbf{S}tream-length \textbf{A}dder (\textbf{AISA}),

\item \textbf{A}synchronous \textbf{I}ncreasing \textbf{S}tream-length \textbf{M}ultiplier (\textbf{AISM}),

\item \textbf{S}ynchronous \textbf{I}ncreasing \textbf{S}tream-length \textbf{A}dder (\textbf{SISA}),

\item \textbf{S}ynchronous \textbf{I}ncreasing \textbf{S}tream-length \textbf{M}ultiplier (\textbf{SISM}),

\item \textbf{S}ynchronous \textbf{C}onstant \textbf{S}tream-length \textbf{A}dder (\textbf{SCSA}), and

\item \textbf{S}ynchronous \textbf{C}onstant \textbf{S}tream-length \textbf{M}ultiplier (\textbf{SCSM}).

\end{itemize} We present simulation results using the Cadence Genus tool with TSMC \SI{}{\textbf{0.18\micro m}} CMOS technology. The results are grouped in the following three subsections. In the first one, we thoroughly compare the proposed adders and multipliers with their predecessors in the literature. Comparisons are made in terms of area, speed, power, and accuracy. In the second subsection, we further evaluate the proposed circuits by considering timing problems and their effects on signal forms of output streams. We also test the circuits' abilities to be used in successive processing and multi-level designs. In the third subsection, we use the proposed adders and multipliers as well as their counterparts in the literature to implement a fully-connected neural network. Comparisons are made in terms of area and misclassification rates of the networks.

\subsection{Area, Speed, and Power Evaluations}

\begin{table*} [ht!]
	\caption{Performance comparison of adders}
		\vspace{-5pt}
	\label{tab_Adder_tc}
	\centering
	\begin{tabular}{|c|c|c||c|c|c|c||c|c|c|c|c||c|}
		\cline{3-13}
		                             \multicolumn{1}{c}{}                              &  \multicolumn{1}{c|}{}  & \multicolumn{1}{c||}{\textbf{Binary-to-Binary}} &                              \multicolumn{4}{c||}{\textbf{Binary-to-Stream}}                               &                                      \multicolumn{5}{c||}{\textbf{Stream-to-Stream}}                                      & \multicolumn{1}{c|}{\textbf{Analog-to-Stream}} \\ \cline{2-13}
		                            \multicolumn{1}{c|}{}                              & \textbf{Input Levels /} &             \textbf{Ripple Carry }              & \textbf{\cite{jenson2016deterministic}} & \textbf{\cite{gupta1988binary} } & \textbf{SISA} & \textbf{SCSA} & \textbf{\cite{jenson2016deterministic}} & \textbf{\cite{gupta1988binary}} & \textbf{AISA} & \textbf{SISA} & \textbf{SCSA} &        \textbf{\cite{najafi2017time} }         \\
		                            \multicolumn{1}{c|}{}                              & \textbf{Stream Length}  &                 \textbf{Adder}                  &                                         &                                  &               &               &                                         &                                 &               &               &               &                                                \\ \hline
		              \multirow{6}{*}{\rotatebox{90}{Area ($ um^{2} $)}}               &       \textbf{8}        &                       361                       &                  2006                   &               1128               &     1084      &     1247      &                  3798                   &              3441               &      83       &     1084      &      276      &                      153                       \\ \cline{2-13}
		                                                                               &       \textbf{16}       &                       594                       &                  2709                   &               1611               &     1417      &     1580      &                  5036                   &              4813               &      136      &     1417      &      276      &                      233                       \\ \cline{2-13}
		                                                                               &       \textbf{32}       &                       745                       &                  3401                   &               2119               &     1726      &     1991      &                  6331                   &              6529               &      242      &     1726      &      276      &                      393                       \\ \cline{2-13}
		                                                                               &       \textbf{64}       &                       978                       &                  4186                   &               2709               &     2059      &     2323      &                  7668                   &              8022               &      446      &     2059      &      276      &                      712                       \\ \cline{2-13}
		                                                                               &      \textbf{128}       &                      1222                       &                  4982                   &               3308               &     2403      &     2679      &                  9017                   &              9582               &      872      &     2403      &      276      &                      1351                      \\ \cline{2-13}
		                                                                               &      \textbf{256}       &                      1560                       &                  5744                   &               3951               &     2771      &     3413      &                  10377                  &              11245              &     1723      &     2771      &      276      &                      2628                      \\ \hline\hline
		\multirow{6}{*}{\rotatebox{90}{\parbox[c]{1.5cm}{\centering Max Freq. (GHz)}}} &       \textbf{8}        &                      3.10                       &                  0.93                   &               1.05               &     0.94      &     0.87      &                  0.90                   &              0.86               &     29.4      &     0.94      &     1.33      &                      23.5                       \\ \cline{2-13}
		                                                                               &       \textbf{16}       &                      2.26                       &                  0.88                   &               1.00               &     0.84      &     0.78      &                  0.88                   &              0.87               &     29.4      &     0.85      &     1.33      &                      26.1                       \\ \cline{2-13}
		                                                                               &       \textbf{32}       &                      2.17                       &                  0.83                   &               0.98               &     0.75      &     0.66      &                  0.83                   &              0.83               &     29.4      &     0.78      &     1.33      &                      27.7                      \\ \cline{2-13}
		                                                                               &       \textbf{64}       &                      1.73                       &                  0.81                   &               0.94               &     0.69      &     0.60      &                  0.79                   &              0.81               &     29.4      &     0.72      &     1.33      &                     28.5                      \\ \cline{2-13}
		                                                                               &      \textbf{128}       &                      1.71                       &                  0.81                   &               0.96               &     0.63      &     0.57      &                  0.78                   &              0.78               &     29.4      &     0.67      &     1.33      &                      29.0                      \\ \cline{2-13}
		                                                                               &      \textbf{256}       &                      1.69                       &                  0.79                   &               0.91               &     0.58      &     0.53      &                  0.77                   &              0.78               &     29.4      &     0.60      &     1.33     &                      29.2                      \\ \hline\hline		                                                                               
		             \multirow{6}{*}{\rotatebox{90}{Power ($ uW $)}}               &       \textbf{8}        &                       15.8                       &                  281                   &               265               &     185      &     271      &                  941                   &              993               &      2.66       &     511      &      60.9      &                      5.3                       \\ \cline{2-13}
		                                                                               &       \textbf{16}       &                       21.1                       &                  378                   &               340               &     247      &     315      &                  1153                   &              1315               &      4.58      &     593      &      60.9      &                      8.2                       \\ \cline{2-13}
		                                                                               &       \textbf{32}       &                       29.6                       &                  473                   &               423               &     283      &     363      &                  1400                   &              1612               &      8.41      &     712      &      60.9      &                      14                       \\ \cline{2-13}
		                                                                               &       \textbf{64}       &                       35.2                       &                  556                   &               513               &     336      &     400      &                  1638                   &              1984               &      16.1      &     815      &      60.9      &                      25.5                       \\ \cline{2-13}
		                                                                               &      \textbf{128}       &                      42.1                       &                  660                   &               609               &     367      &     426      &                  1874                   &              2256               &      31.4      &     965      &      60.9      &                      48.5                      \\ \cline{2-13}
		                                                                               &      \textbf{256}       &                      55.3                       &                  730                   &               686               &     403      &     518      &                  2127                  &              2573              &     62.1     &     1183      &      60.9      &                      94.5                      \\ \hline
	\end{tabular}
	\vspace{-10pt}
\end{table*}

\begin{table*}
	\caption{Performance comparison of multipliers}
		\vspace{-5pt}
	\label{tab_Multiplier_tc}
	\centering
	\begin{tabular}{|c|c|c||c|c|c|c||c|c|c|c|c||c|}
		\cline{3-13}
		                             \multicolumn{1}{c}{}                              &  \multicolumn{1}{c|}{}  & \multicolumn{1}{c||}{\textbf{Binary-to-Binary}} &                              \multicolumn{4}{c||}{\textbf{Binary-to-Stream}}                               &                                      \multicolumn{5}{c||}{\textbf{Stream-to-Stream}}                                       & \multicolumn{1}{c|}{\textbf{Analog-to-Stream}} \\ \cline{2-13}
		                            \multicolumn{1}{c|}{}                              & \textbf{Input Levels /} &                 \textbf{Array}                  & \textbf{\cite{jenson2016deterministic}} & \textbf{\cite{gupta1988binary}} & \textbf{SISM } & \textbf{SCSM} & \textbf{\cite{jenson2016deterministic} } & \textbf{\cite{gupta1988binary}} & \textbf{AISM} & \textbf{SISM} & \textbf{SCSM} &         \textbf{\cite{najafi2017time}}         \\
		                            \multicolumn{1}{c|}{}                              & \textbf{Stream Length}  &               \textbf{Multiplier}               &                                         &                                 &                &               &                                          &                                 &               &               &               &                                                \\ \hline
		              \multirow{6}{*}{\rotatebox{90}{Area ($ um^{2} $)}}               &       \textbf{8}        &                       943                       &                  1633                   &               953               &      1357      &     2328      &                   3426                   &              3266               &     2025      &     3253      &     4179      &                      313                       \\ \cline{2-13}
		                                                                               &       \textbf{16}       &                      2084                       &                  2197                   &              1402               &      1868      &     3098      &                   4524                   &              4603               &     7540      &     4317      &     5471      &                      978                       \\ \cline{2-13}
		                                                                               &       \textbf{32}       &                      3585                       &                  2784                   &              1863               &      2378      &     3790      &                   5714                   &              6273               &     28743     &     5505      &     6586      &                      2586                      \\ \cline{2-13}
		                                                                               &       \textbf{64}       &                      5424                       &                  3441                   &              2421               &      2889      &     4540      &                   6923                   &              7735               &    108953     &     6611      &     7938      &                     13911                      \\ \cline{2-13}
		                                                                               &      \textbf{128}       &                      7706                       &                  4074                   &              2972               &      3422      &     5345      &                   8109                   &              9246               &    435916     &     7740      &     9288      &                      ****                      \\ \cline{2-13}
		                                                                               &      \textbf{256}       &                      10407                      &                  4685                   &              3600               &      3980      &     6316      &                   9318                   &              10895              &    1740494    &     9158      &     10771     &                      ****                      \\ \hline\hline
		\multirow{6}{*}{\rotatebox{90}{\parbox[c]{1.5cm}{\centering Max Freq. (GHz)}}} &       \textbf{8}        &                      1.84                       &                  0.94                   &              1.12               &      0.94      &     0.85      &                   0.91                   &              0.86               &     14.7      &     0.81      &     0.84      &                      23.5                      \\ \cline{2-13}
		                                                                               &       \textbf{16}       &                      1.24                       &                  0.89                   &              1.08               &      0.92      &     0.75      &                   0.89                   &              0.87               &     14.7      &     0.77      &     0.81      &                      26.1                      \\ \cline{2-13}
		                                                                               &       \textbf{32}       &                      1.00                       &                  0.84                   &              1.06               &      0.84      &     0.77      &                   0.83                   &              0.83               &     14.7      &     0.73      &     0.76      &                      27.7                      \\ \cline{2-13}
		                                                                               &       \textbf{64}       &                      0.83                       &                  0.81                   &              1.04               &      0.80      &     0.76      &                   0.79                   &              0.81               &     14.7      &     0.71      &     0.75      &                      28.5                      \\ \cline{2-13}
		                                                                               &      \textbf{128}       &                      0.76                       &                  0.81                   &              1.00               &      0.74      &     0.76      &                   0.78                   &              0.78               &     14.7      &     0.68      &     0.74      &                      29.0                      \\ \cline{2-13}
		                                                                               &      \textbf{256}       &                      0.69                       &                  0.80                   &              0.99               &      0.68      &     0.73      &                   0.77                   &              0.78               &     14.7      &     0.62      &     0.73      &                      29.2                      \\ \hline\hline                                                                                                     		                                                                           
		               \multirow{6}{*}{\rotatebox{90}{Power ($ uW $)}}                 &       \textbf{8}        &                      19.4                       &                   302                   &               265               &      259      &     522      &                   916                   &              246               &     64.2      &     861      &     1192      &                      11.1                       \\ \cline{2-13}
		                                                                               &       \textbf{16}       &                      57.4                       &                   381                   &              340               &      341      &     660      &                   1178                   &              344               &     252      &     1085      &     1501      &                      35.1                       \\ \cline{2-13}
		                                                                               &       \textbf{32}       &                       125                       &                   482                   &              423               &      431      &     817      &                   1398                   &              406               &     996     &     1317      &     1765      &                      129                      \\ \cline{2-13}
		                                                                               &       \textbf{64}       &                       217                       &                   571                   &              513               &      501      &     895      &                   1631                   &              485               &    3918     &     1548      &     2127      &                     501                      \\ \cline{2-13}
		                                                                               &      \textbf{128}       &                       356                       &                   698                   &              609               &      600      &     1045      &                   1880                   &              587               &    15961     &     1769      &     2367      &                      ****                      \\ \cline{2-13}
		                                                                               &      \textbf{256}       &                       509                       &                   755                   &              686               &      661      &     1222      &                   2078                   &              643              &    62679    &     2100      &     2667     &                      ****                      \\ \hline
	\end{tabular}
	\vspace{-15pt}
\end{table*}

We evaluate the designs using Cadence Genus Synthesis Tool with TSMC 0.18 \SIUnitSymbolMicro m CMOS digital library. We make the tool automatically generate area, speed, and power results of the proposed and compared designs.
In comparisons, we consider three studies offering accurate stochastic operations that are based on clock division \cite{jenson2016deterministic}, using LFSR's \cite{gupta1988binary}, and PWM signals \cite{najafi2017time}. We also consider conventional binary ripple carry adder and array multiplier circuits. In order to make fair comparisons, we take into account the signal forms, classified as binary, stream, and analog, at the inputs and the outputs. All of the proposed six circuits with BSC use streams as inputs and outputs. However, since the proposed synchronous designs do already make stream-to-binary conversion via counters and registers, they can be directly used for binary-to-stream computing with even smaller circuit sizes. There is an exception for  Input-2 of SISA, whose overhead is also considered. Moreover, the studies \cite{jenson2016deterministic} and \cite{gupta1988binary} use binary inputs and stream outputs; to make them perform stream-to-stream computing, counters and registers can be added to the inputs, that is why their stream-to-stream cases are more costly than their binary-to-stream counterparts. 

On the other hand, since the study \cite{najafi2017time} uses analog inputs and it is not straightforward to make such conversions, we separately evaluate it. 
The designs in \cite{najafi2017time} essentially include analog comparators, ramp generators, and clock generators. With the assumption that clock generators predominate over the other blocks (especially for longer stream lengths), we prefer to synthesize them only. Clock generators are mainly inverter chains. Thus, for the designs in \cite{najafi2017time} and our asynchronous designs, we use shortest possible inverter chains regarding given stream lengths.

We report area, maximum frequency, and power results in Table \ref{tab_Adder_tc} and Table \ref{tab_Multiplier_tc} for  adders and  multipliers, respectively. Maximum frequency ($fmax$) numbers are directly related to worst case delay values to  represent how fast the design are. Note that one can obtain the minimum bit duration as $1/fmax$. Power values are generated using 500 MHz clock frequency at which all designs work properly. 
We consider different input levels; for example, the input level of 32 corresponds to 5 binary inputs or a stream having a length of 32. 

Examining the numbers in Table \ref{tab_Adder_tc}, we see that the proposed adders mostly overwhelm the others in their categories ``binary-to-stream" and ``stream-to-stream". Even for values in categories, ``binary-to-binary" and ``analog", there are always better values in our designs.  Similarly, in Table \ref{tab_Multiplier_tc}, our designs generally give the best results with an exception that in the category ``binary-to-stream", the study \cite{gupta1988binary} gives slightly better results.   

With regard to the power and speed performances of compared works, we can also comment on the energy consumptions. The studies [6], [10] and the proposed SISA and SISM circuits have longer output streams, i.e. longer processing times. Therefore, the consumed energy becomes crucially high, compared to SCSA and SCSM circuits having non-increasing output stream lengths. For instance, 256-stream-length SCSM's need 256 times less processing time and approximately 128 times less energy  
even in the worst case.

Note that $fmax$ values given in Table \ref{tab_Adder_tc} and Table \ref{tab_Multiplier_tc} are obtained using worst case delay conditions, so at this frequency values or smaller ones in one clock cycle one can guarantee correct logical functionality of the circuits. However, it does not mean that we always achieve correct output streams, obtained in many successive clock cycles, regarding the timing problems previously mentioned in Fig. \ref{fig_RealizationErrors}. For example, the problems given in Fig. \ref{fig_RealizationErrors} a) and c), happen in multi clock cycles, so they do not have any affect in obtaining $fmax$ values. Here, further timing investigations are needed and done in the following subsection.

\subsection{Timing Evaluations} 
\label{transistor_exp_results}

\begin{table*} [ht]
	\caption{Integrity and correctness results of the proposed adders and multipliers (INT:Integrity, COR:Correctness)}
		\vspace{0pt}
	\label{tab_Trans}
	\centering
	\begin{tabular}{|c|*{5}{c|c||}c|c|}
		\hline
		\multirow{ 2}{*}{\textbf{Bit Duration}} & \multicolumn{2}{c||}{\textbf{AISA}} & \multicolumn{2}{c||}{\textbf{AISM}} & \multicolumn{2}{c||}{\textbf{SISA}} & \multicolumn{2}{c||}{\textbf{SISM}} & \multicolumn{2}{c||}{\textbf{SCSA}} & \multicolumn{2}{c|}{\textbf{SCSM}}  \\ \cline{2-13}
		                                     & \textbf{ INT } &   \textbf{COR }    & \textbf{INT} &    \textbf{ COR}     & \textbf{ INT} &    \textbf{COR}     & \textbf{INT } &    \textbf{COR }    & \textbf{INT} &    \textbf{COR }     & \textbf{INT  } & \textbf{ COR     } \\ \hline
		           \textbf{0.5ns}            &     86\%     &       100\%        &    70.3\%    &        87.5\%         &    0\%     &       0\%        &      0\%      &         0\%         &    0\%    &         0\%         &      0\%       &        0\%         \\ \hline
		          \textbf{0.75ns}            &     85.5\%     &       100\%        &    64.7\%    &        81.1\%        &    0\%     &        62.5\%        &    71.6\%     &        75\%        &    0\%    &        0\%         &      54\%       &        50\%         \\ \hline
		            \textbf{1ns}             &     85.2\%     &       100\%        &    66\%    &         81.1\%         &    90.8\%     &        100\%        &    99.1\%     &        100\%        &    0\%    &        0\%         &      93\%       &        50\%         \\ \hline
		            \textbf{2ns}             &      N/A       &        N/A         &     N/A      &         N/A          &    97\%     &        100\%        &    99.6\%     &        100\%        &    98\%    &        100\%         &  80\%   &     100\%     \\ \hline
		           \textbf{10ns}             &      N/A       &        N/A         &     N/A      &         N/A          &    99.1\%     &        100\%        &    99.9\%     &        100\%        &    99.6\%    &        100\%         &  99.5\%   &       100\%        \\ \hline
	\end{tabular}
	\vspace{-15pt}
\end{table*}

\begin{table*} [ht]
	\caption{Integrity and correctness results of the proposed adders and multipliers with PVT simulations (INT:Integrity, COR:Correctness)}
	\vspace{0pt}
	\label{tab_Trans2}
	\centering
	\begin{tabular}{|c|*{5}{c|c||}c|c|}
		\hline
		\multirow{ 2}{*}{\textbf{Bit Duration}} & \multicolumn{2}{c||}{\textbf{AISA}} & \multicolumn{2}{c||}{\textbf{AISM}} & \multicolumn{2}{c||}{\textbf{SISA}} & \multicolumn{2}{c||}{\textbf{SISM}} & \multicolumn{2}{c||}{\textbf{SCSA}} & \multicolumn{2}{c|}{\textbf{SCSM}}  \\ \cline{2-13}
		                                     & \textbf{ INT } &   \textbf{COR }    & \textbf{INT} &    \textbf{ COR}     & \textbf{ INT} &    \textbf{COR}     & \textbf{INT } &    \textbf{COR }    & \textbf{INT} &    \textbf{COR }     & \textbf{INT  } & \textbf{ COR     } \\ \hline
		           \textbf{0.5ns}            &     77.7\%     &       87.5\%       &    50.7\%    &        68.8\%        &      0\%      &         0\%         &      0\%      &         0\%         &     0\%      &         0\%          &      0\%       &        0\%         \\ \hline
		          \textbf{0.75ns}            &     74.2\%     &       87.5\%       &    58.3\%    &         75\%         &      0\%      &         0\%         &      0\%      &         0\%         &     0\%      &         0\%          &      0\%      &        0\%        \\ \hline
		            \textbf{1ns}             &     72.4\%     &       87.5\%       &    23.7\%    &        37.5\%        &      0\%      &       62.5\%        &    6.25\%     &       62.5\%        &     0\%      &         0\%          &      0\%      &        0\%        \\ \hline
		            \textbf{2ns}             &      N/A       &        N/A         &     N/A      &         N/A          &    95.6\%     &        100\%        &    98.7\%     &        100\%        &    96.9\%    &        100\%         &      0\%      &       0\%        \\ \hline
		           \textbf{10ns}             &      N/A       &        N/A         &     N/A      &         N/A          &    99.1\%     &        100\%        &    99.9\%     &        100\%        &    99.4\%    &        100\%         &     99.1\%     &       100\%        \\ \hline
	\end{tabular}
	\vspace{-15pt}
\end{table*}

\begin{figure}[t!]
	\centering
	\includegraphics[scale=0.25]{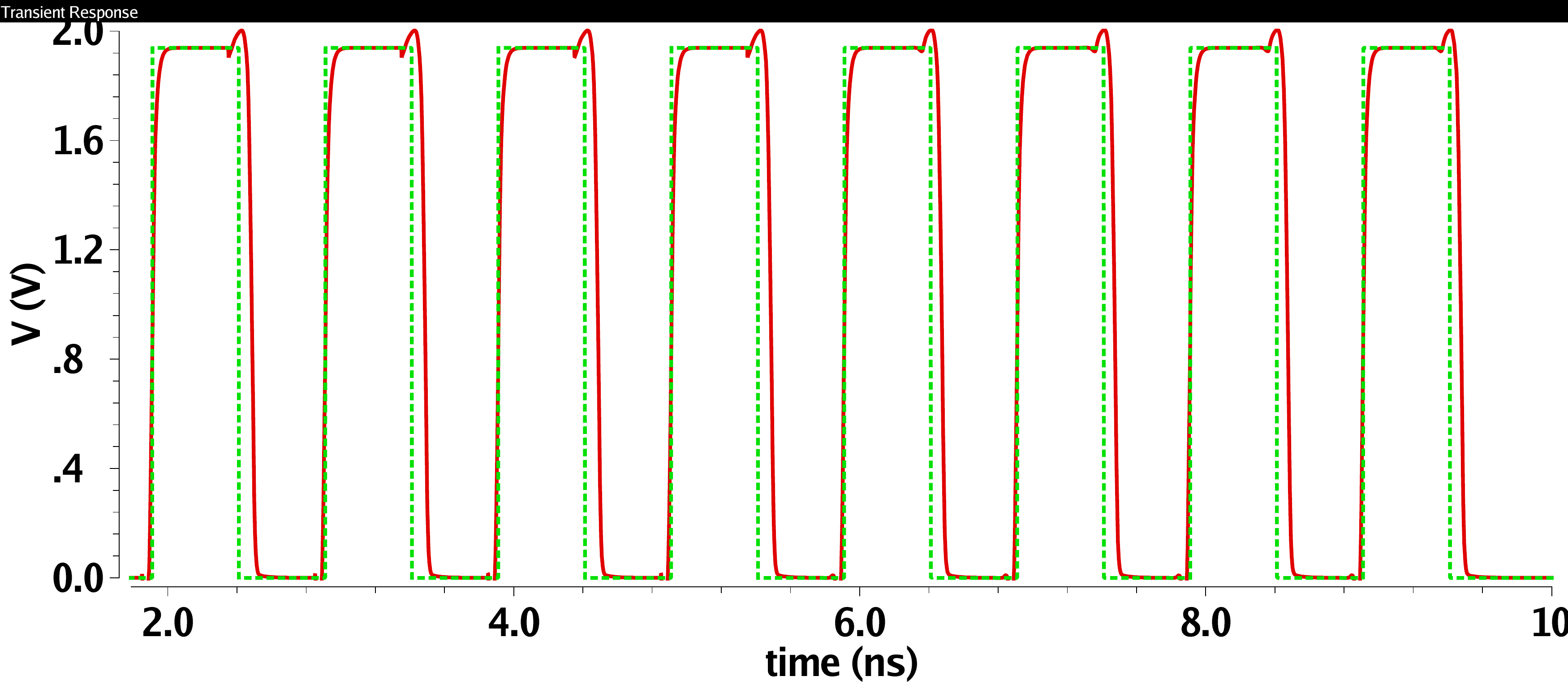}
{\small a)}
		\includegraphics[scale=0.25]{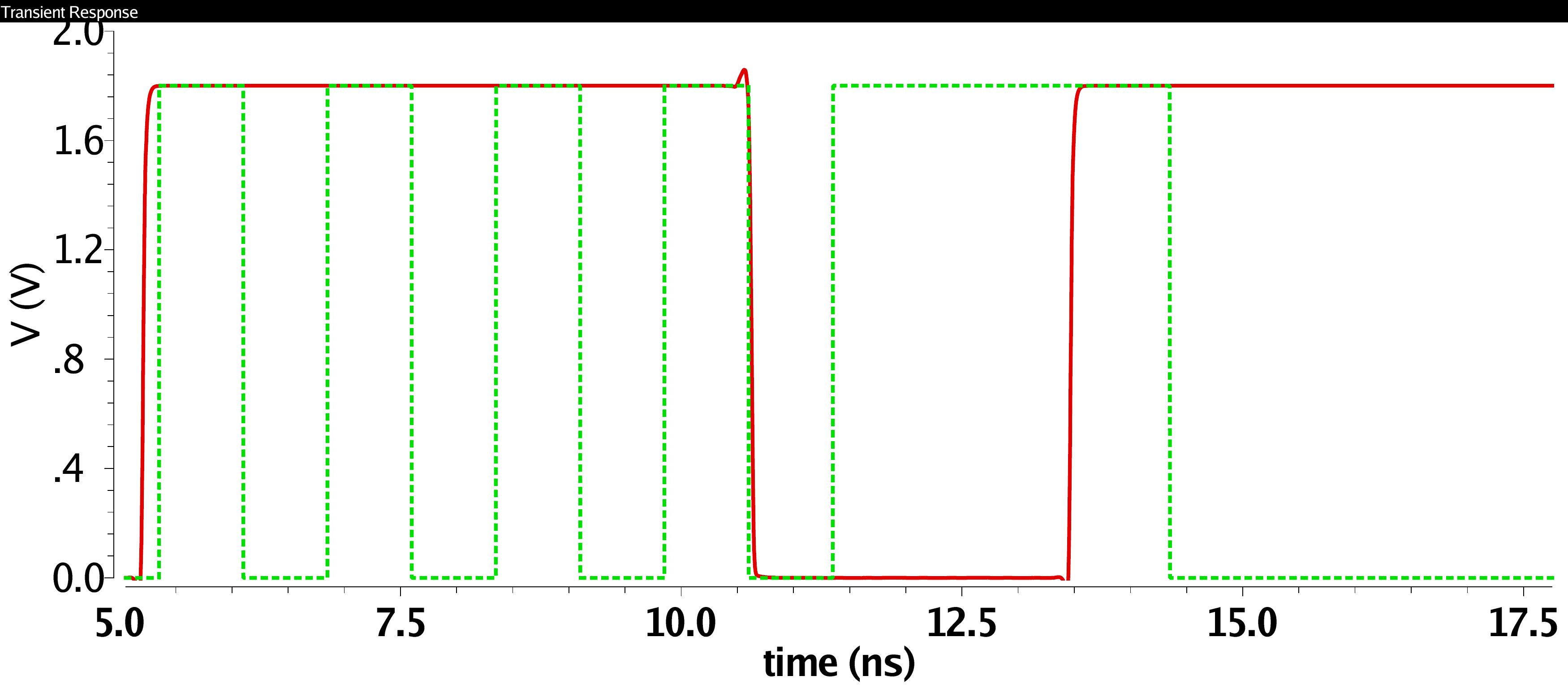}
		{\small b)}
		\vspace{-10pt}
	\caption{Output signal forms for a) AISA with 0.5ns bit duration, and b) SISA with 0.75ns bit duration. Red solid lines and green dashed lines represent real and expected outputs, respectively. }	
	\vspace{-10pt}
	\label{fig_TransLevel}
\end{figure}

We perform simulations with {0.18}\SI{}{\micro m} CMOS technology in Cadence environment. We test the proposed adders and multipliers for different input bit durations ranging between 0.5ns and 10ns; recall that a bit duration is defined as the time duration of a bit in  streams. Input values are selected such that the expected output value is around $1/2$ that can be considered as the worst case scenario for accuracy. Input stream lengths are selected as 8 for all simulations.
 Two performance metrics are considered: 1) integrity of the output signal to represent how much  the obtained timing durations of 1 valued bits match with the ideal ones, and 2) correctness of the obtained output value to represent how much the obtained values match with the expected output values. To elaborate, consider an expected output stream $1,0,1,1,0,0,0,1$ with a bit duration is 1ns. Also suppose that from simulations, we obtain 1 valued bit durations of 1.1ns, 2.1ns, and 0.4ns; ideally it should be 1ns, 2ns, and 1ns, respectively. For the first metric, we first calculate absolute deviations of 0.1ns, 0.1ns, and 0.6ns, then relative deviations 0.1ns$/$1ns, 0.1ns$/$2ns, and 0.6$/$1ns, and finally the average deviation of 25\% is obtained that results in 75\% signal integrity. For the second metric, we first obtain the ratios of the obtained output bit durations over the given bit duration. For this example, the ratios are 1.1ns$/$1ns, 2.1ns$/$2ns, and 0.4$/$1ns. Then we round them to the nearest integers, as 1, 2, and 0 for the example. Finally we have the obtained output value of $3/8$ with 75\% correctness. 

Table \ref{tab_Trans} shows the transistor level results of the proposed circuits for various bit duration values. Note that 0\% integrity values corresponds to 100\% or more deviations in durations, and they do not necessarily results in 0\% correctness values. Also note that 100\% integrity corresponds  to an ideal case, so it is not possible in our simulations. All of the proposed circuits can work for a bit duration equal to or smaller than 2ns, so the operating frequency of 500MHz can be achievable. Note that the proposed asynchronous circuits do a better job in small bit durations, thanks to their simple and delay block based structures. However, since increase in the bit duration requires more delay, and after a certain point it cannot be achieved with the controlling mechanism (in our case, VDD scaling), additional hardware in terms of extra inverters is needed. That is why for bit durations of 2ns and 10ns, the proposed circuits fail. Indeed we can generally claim that the proposed asynchronous and synchronous designs are proper for high speeds ($\geq$ 1GHz) and low speeds ($\leq$ 1GHz), respectively.

Fig. \ref{fig_TransLevel} shows the expected and the real output signal forms for three different cases. While the output signal has a good alignment in Fig. \ref{fig_TransLevel} a), Fig. \ref{fig_TransLevel} b) shows undesirable timing problems. Checking the corresponding values in Table \ref{tab_Trans}, we have 86\% integrity and 100\% correctness for the signal in Fig. \ref{fig_TransLevel} a), and 0\% integrity and 62.5\% correctness for the signal in Fig. \ref{fig_TransLevel} b).

We also carry out PVT (Process-Voltage-Temperature) simulations over the proposed designs. As corners, we use 5 different MOSFET process models, 2 different temperatures ($0^\circ$C and $80^\circ$C), and supply voltage values 10\% below and above the ideal one. Among 20 corners, we select the one with worst correctness numbers. Table \ref{tab_Trans2} lists the results. We see that even in the extreme conditions, the proposed circuits perform satisfactorily. While these conditions cause a decrease in correctness of the asynchronous designs, they force the synchronous designs to work in lower speeds.

Note that we evaluate the circuits for only 8 bit inputs as a simple case to show the potential of the proposed circuits for practical use. Larger input stream lengths could be considered. However, since larger lengths make our circuits more complicated with systematic timing design strategies needed to be followed, we consider this as a future work.    However, it is not hard to predict that for larger input stream lengths our asynchronous designs would have again quite successful outcomes, mainly because the architectures are quite straightforward. This is also true for SCSA, due to its scalable architecture. However, other synchronous designs, especially SCSM, circuits are getting more complex for larger stream lengths, so they may become more prone to timing errors.  

We also evaluate the aforementioned methods qualitatively  in Table \ref{tab_Qual}. Here, the latency criteria is the total processing time of receiving the output streams that is highly related to stream lengths. The conventional stochastic is the worst in latency considering that it needs relatively large stream lengths to obtain decent accuracy, previously shown in Fig. 2. The proposed circuits as well as the compared studies using deterministic streams perform better in this regard. However, the conventional binary is the best due to its parallel processing feature. For the accuracy criteria, all of the fully-accurate circuits do come first, followed by the proposed semi-accurate circuits. For the area criteria, we evaluate the circuits using the area values previously reported in Table III and IV. We rank the conventional stochastic as the best area efficient one. However if we considered the costly random number generators, needed for SC, then the area cost of SC would even become the worst.  The successive processing criteria is the ability of the circuits to produce same number of bits in an output as in an input. Here, the proposed constant stream circuits come forward. We also evaluate the suitability of the circuits to be used in multi-level designs. If an output of a circuit can be directly used as an input of another circuit in the next level, then the circuit is suitable for multi-level designs. As can be directly deducted from the definition of BSC in the preliminaries section, this requirement is met for any circuit using BSC. On the other hand, all of the compared studies using deterministic streams have limitations in this regard. 

In summary, when all criteria are equally important, the conventional binary and the proposed SCSA/SCSM are fairly competent. However, when accuracy and area are the most important factors, the proposed SISA/SISM comes forward.

\setlength\tabcolsep{1.5pt}

\begin{table}
	\caption{Qualitative comparison of adders and multipliers}
		\vspace{-5pt}
	\label{tab_Qual}
	\centering	
	\begin{tabular}{|c|*{5}{c|}}
		\cline{2-6}
		         \multicolumn{1}{c|}{}          & \textbf{Latency} & \textbf{Accuracy} & \textbf{Area} & \textbf{Successive} & \textbf{Multi-level} \\
		         \multicolumn{1}{c|}{}          &                  &                   &               & \textbf{Processing} &   \textbf{Design}    \\ \hline
		         \textbf{Conventional}          &    Excellent     &     Excellent     &   Moderate    &        Good         &      Excellent       \\
		           \textbf{Binary }             &                  &                   &               &                     &                      \\ \hline
		\textbf{\cite{jenson2016deterministic}} &     Moderate     &     Excellent     &     Poor      &        Poor         &       Moderate       \\ \hline
		   \textbf{\cite{gupta1988binary} }     &     Moderate     &     Excellent     &   Moderate    &        Poor         &       Moderate       \\ \hline
		    \textbf{\cite{najafi2017time}}      &     Moderate     &     Excellent     &     Poor      &        Poor         &         Poor         \\ \hline
		          \textbf{AISA/AISM }           &     Moderate     &     Excellent     &   Moderate    &        Poor         &      Excellent       \\ \hline
		          \textbf{SISA/SISM }           &     Moderate     &     Excellent     &     Good      &        Poor         &      Excellent       \\ \hline
		          \textbf{SCSA/SCSM }           &     Moderate     &       Good        &     Good      &      Excellent      &      Excellent       \\ \hline
		         \textbf{Conventional}          &       Poor       &       Poor        &   Excellent   &      Excellent      &      Excellent       \\
		          \textbf{Stochastic}           &                  &                   &               &                     &                      \\ \hline
	\end{tabular}
	\vspace{-15pt}
\end{table}

\setlength\tabcolsep{6pt}

\subsection{Evaluations within Neural Networks}
\begin{table*} [t]
	\caption{Total areas used in a neural network with PENDIGIT database ( Area:$ mm^{2} $, MR:Misclassification Rate)}
		\vspace{-7pt}
	\label{tab_Mnist}
	\centering
	\begin{tabular}{|c|*{5}{c|c||}c|c|}
		\hline
		\multirow{ 2}{*}{\textbf{Input Levels}} & \multicolumn{2}{c||}{\textbf{\cite{lee2017energy}}} & \multicolumn{2}{c||}{\textbf{\cite{gupta1988binary}}} & \multicolumn{2}{c||}{\textbf{SISA-SISM}} & \multicolumn{2}{c||}{\textbf{SCSA-SCSM}} & \multicolumn{2}{c||}{\textbf{Conventional Binary}} &            \multicolumn{2}{c|}{\textbf{Conventional Stochastic}}          \\ \cline{2-13}
		                                        & \textbf{ Area } &            \textbf{MR }             & \textbf{Area} &              \textbf{ MR}               & \textbf{ Area} &        \textbf{MR}        & \textbf{Area } &       \textbf{MR }        & \textbf{Area} &             \textbf{MR }             & \textbf{Area  } &                      \textbf{ MR     }                      \\ \hline
		              \textbf{8}                &     9.92      &               68.34\%               &    22.1     &                 7.52\%                  &     17.0     &          7.52\%           &     12.1     &          51.92\%           &    8.78     &                7.52\%                &     9.67    &                           84.6\%                           \\ \hline
		              \textbf{16}               &     12.6      &               37.56\%               &    31.3     &                 3.75\%                  &     23.8     &          3.75\%           &     15.6     &          12.58\%           &    13.8     &               3.75\%                &    12.3     &                           70.9\%                            \\ \hline
		              \textbf{32}               &     15.4      &               13.27\%                &    40.3     &                 2.94\%                  &     29.9     &          2.94\%           &     19.2     &          5.20\%           &    19.8     &                2.94\%                &    15.1     &                           52.2\%                           \\ \hline
		              \textbf{64}               &     18.5      &               5.21\%                &    50.9     &                 3.06\%                  &     36.9     &          3.06\%           &     23.0     &          3.37\%           &    27.0     &                3.06\%                &    18.2     & 31.3\%                \\ \hline
		             \textbf{128} & 21.6 & 3.59\% & 61.7 & 2.74\% & 44.2 &2.74\% & 27.0 & 3.09\% & 35.4 & 2.74\% & 21.4 & 15.12\%                \\ \hline
		             \textbf{256} & 24.2 & 3.35\% & 73.5 & 2.77\% & 51.7 & 2.77\% & 31.2 & 2.94\% & 44.7 & 2.77\% & 23.9 & 7.72\%              \\ \hline
	\end{tabular}
\end{table*}

To further evaluate the proposed circuits, we choose a fully-connected neural network because it mainly consists of adders and multipliers, and it does not require perfect accuracy. We use the PENDIGIT database which is a set of handwritten digits \cite{alpaydin1998pen}. It has 16 different input features corresponding to 16 perceptrons in the input layer of the neural network. Also our network has one hidden layer having 100 perceptrons. Obviously the output layer has 10 perceptrons to represent 10 digits. Except for the conventional implementation of the network with binary multipliers and adders, in each layer binary input values and their weights are multiplied with the binary-to-stream multipliers, and then they are summed in pairs with stream-to-stream adders. After that by first using stream-to-binary converters (just counters), all processes up to the next layer are implemented with conventional binary circuits. We also use rectifier linear unit (ReLU) as an activation function, due to its quite simpler hardware implementation and similar accuracy performance compared to sigmoid and others. In training the network, we use exact or fully-accurate adders, multipliers, and converters as well as the RELU. Therefore, weights of the network is same for all different implementations involving different adder and multiplier structures.

In comparisons, we consider four different implementation techniques of adders and multipliers. The first one offers the most area efficient accurate adders and multipliers among the studies considered in Table \ref{tab_Adder_tc} and Table \ref{tab_Multiplier_tc} \cite{gupta1988binary}. The second one uses adders very similar to the proposed SCSA, and conventional stochastic multipliers (AND gates) \cite{lee2017energy}. The third one uses conventional binary ripple carry adders and array multipliers, and finally the fourth technique employs conventional stochastic adders (2-to-1 multiplexers) and multipliers (AND gates). 
For the conventional stochastic circuits, randomly distributed input streams are needed. Therefore, binary-to-stream multipliers need one LFSR and digital comparators as twice as the multipliers. As stated in \cite{gaines1967stochastic}, one LFSR is enough for input streams, thanks to the low correlations between shifted streams. Additionally, each stream-to-stream adder needs a digital comparator for the generation of 0.5 valued stream for the select input of the multiplexer; again one LFSR is adequate for all adders.

Table \ref{tab_Mnist} gives the results. The proposed implementations are clearly the best ones in terms of accuracy and circuit area. Of course, if we did not consider the costs of stochastic number generators, \cite{lee2017energy} and the conventional stochastic would have much smaller transistor counts. However, this would not be a fair comparison.

	\vspace{-10pt}
\section{Conclusion} \label{Conc}

We introduce a novel computing paradigm ``Bit Stream Computing (BSC)" that benefits from the area advantage of stochastic logic and the accuracy advantage of conventional binary logic. We implement accurate arithmetic multiplier and adder circuits with BSC as well as using them in a neural network. Experimental results performed in Cadence environment with \SI{}{0.18\micro m} CMOS technology approve the efficiency of the proposed circuits. As a future work, we aim to develop hybrid computing schemes performing BSC and conventional binary computing. We will also investigate bit/digit serial computing in this regard.

Another direction is testing the proposed circuits in large area electronics including organic and flexible circuits that should have relatively small number of transistors. We comment that as opposed to conventional binary circuits, the proposed circuits performing BSC can be suitable for this.

\bibliographystyle{IEEEtran}
	\vspace{-7pt}
\bibliography{main}
	\vspace{-20pt}
	
 \begin{IEEEbiography}[{\includegraphics[width=1in,height=1.25in,clip,keepaspectratio]{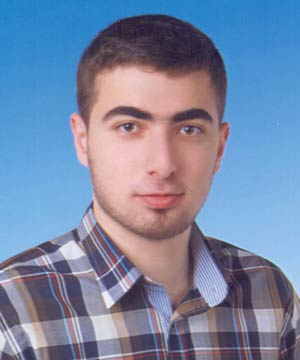}}]{Ensar Vahapoglu}
	received the B.Sc. degree from the Department of Electronics and Communication Engineering Istanbul Technical University, Istanbul, Turkey in 2015. He is currently a M.Sc. student and works as a research assistant in the same department. His main research areas are analog/digital circuits design, stochastic computing and quantum computing. 
	\end{IEEEbiography}
	\vspace{-20pt}
\begin{IEEEbiography}[{\includegraphics[width=1in,height=1.25in,clip,keepaspectratio]{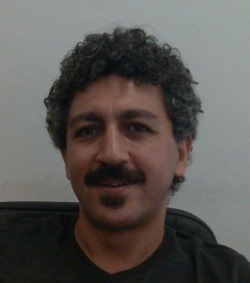}}]{Mustafa Altun}
received   his   BSc   and   MSc degrees  in  electronics  engineering  at  Istanbul Technical University in 2004 and 2007, respectively. He received his PhD degree in electrical engineering with a PhD minor in mathematics at the University of Minnesota in 2012. Since 2013, he has served as an assistant professor at Istanbul Technical University and runs the Emerging Circuits and Computation (ECC) Group. Dr. Altun has been served as a principal investigator/researcher  of  various  projects  including  EU H2020 RISE, National Science Foundation of USA (NSF) and TUBITAK projects. He is an author of more than 50 peer reviewed papers and a book chapter, and the recipient of the TUBITAK Success, TUBITAK Career, and Werner von Siemens Excellence awards.
\end{IEEEbiography}

\noindent \textbf{List of Differences:}

A preliminary version of this paper, titled “Accurate Synthesis of Arithmetic Operations with Stochastic Logic” was presented at the IEEE Computer Society Annual Symposium on VLSI (ISVLSI), 2016. Nearly 70\% of material in this manuscript is new:
\begin{itemize}
	\item At least 50\% of the material in the introduction section (Section I) including the entire Subsection I-A;
	
	\item At least 40\% of the material in the preliminaries section (Section II) including Theorem II and the summary of the proposed circuits, corresponding to Fig. 5 and the related text;
	
	\item Among the proposed 3 adders and 3 multipliers, given in Section  III and IV, only the asynchronous adder is fundamentally same as the circuit proposed in the previous publication; the other 5 circuits are completely new;
	\item The experimental results section (Section V) is almost 100\% new. All results for the adders and multipliers in Section V-A and V-B as well as those of the neural network application are new.
\end{itemize}

\end{document}